\def\ps@pprintTitle{%
	\let\@oddhead\@empty
	\let\@evenhead\@empty
	\def\@oddfoot{\reset@font\hfil\thepage\hfil}
	\let\@evenfoot\@oddfoot
}
\theoremstyle{definition}
\newtheorem{theorem}{Theorem}
\newtheorem{lemma}[theorem]{Lemma}
\newtheorem{rem}[theorem]{Remark}
\newtheorem{cor}[theorem]{Corollary}
\DeclareMathOperator{\EX}{\mathbb{E}}
\DeclareMathOperator{\prob}{\mathbb{P}}
\begin{document}

\begin{frontmatter}

\title{Estimation of the tail index of Pareto-type distributions using regularisation \tnoteref{mytitlenote}}

\vspace{0.2cm}

\author{E. Ocran, R. Minkah,  G.  Kallah-Dagadu and K. Doku-Amponsah \textsuperscript{}  \footnote{Corresponding  Author:  kdoku-amponsah@ug.edu.gh}}
\address{Department of Statistics $\&$ Actuarial Science,\\ School of Physical and Mathematical Sciences, \\University of Ghana}

\begin{abstract}
	In this paper, we introduce reduced-bias estimators for the estimation of the tail index of a Pareto-type distribution. This is achieved through the use of a regularised weighted least squares with an exponential regression model for log-spacings of top order statistics. The asymptotic properties of the proposed estimators are investigated analytically and found to be asymptotically unbiased, consistent and normally distributed. Also, the finite sample behaviour of the estimators are studied through a simulations theory. The proposed estimators were found to yield low bias and MSE. In addition, the proposed estimators are illustrated through the estimation of the tail index of the underlying distribution of claims from the insurance industry.
	
		\textbf{Keywords and Phrases}: 	Statistics of extremes ; tail index; Pareto-type distribution; large deviations; Hill estimator; Strong law  of  large  numbers; Limit  theorem.

		\textbf{MSC 2010}: 62G32,  60F10,  60F25. 
\end{abstract}

\end{frontmatter}
\section{Introduction}
 Pareto-type distributions are often encountered in applications in the area of Finance \cite{longin2016extreme,kithinji2021adjusted,gkillas2018application}, re-insurance \cite{minkah2021robust, minkah2020tail,rohrbeck2018extreme}, risk management \cite{mcneil1997estimating,magnou2017application,afuecheta2020application} and telecommunication \cite{mehrnia2021wireless,finkenstadt2003extreme}. This distribution type has tail function,

\begin{equation}\label{E1}
	1-F(x)=x^{-1/\gamma}\ell(x) \hspace{4pt} \text{as} \hspace{4pt} x\to \infty,
\end{equation}

or equivalently upper tail quantile function 

\begin{equation}\label{E2}
	U(x)=x^{\gamma}\ell_{u}(x) \hspace{4pt} \text{as} \hspace{4pt} x\to \infty.
\end{equation}

The component $\ell_F$ and $\ell_{u}$ are slowly varying functions expressed as 
\begin{equation}\label{E3}
	\lim\limits_{t\to \infty}\dfrac{\ell(xt)}{\ell(t)}=1, \hspace{6pt} t>1.
\end{equation} 

The parameter $\gamma$, is strictly positive for Pareto-type distributions and is also known as the tail index. 

Suppose $X_1, X_2, ..., X_n$ denote independent and identically distributed (i.i.d) random variables drawn from a distribution belonging to the maximum domain of attraction of the Pareto family of distributions, then for some auxiliary sequences of constants $\left\{a_{n}>0; n\ge1\right\}$ and $\left\{b_n; n\ge1\right\}$ \cite{ivette2008tail}

\begin{equation}\label{E4}
	\lim\limits_{n\to \infty}P\left(\dfrac{\max\left\{X_1, X_2, ..., X_n\right\}-b_n}{a_n}\le x\right)=\exp\left\{-\left(1+\gamma x\right)^{-1/\gamma}\right\}, \hspace{4pt} 1+\gamma x\ge0, \hspace{4pt} 	
\end{equation}
where  $\gamma\ge 0.$
The estimation of $\gamma$ continues to receive considerable attention in statistics of extremes as all inferences in extreme value analysis depend on the tail index. In practice, we seek estimators with less variance and bias as possible. A parametric or semi-parametric approach can be employed to estimate the tail index,\cite{Buitendag2018}, \cite{brazauskas2000robust}, \cite{wang2016new}, \cite{tripathi2014improved}. However, in this paper we employ the semi-parametric approach to develop reduced-bias estimators since they result in bias reduction. 

Under the semi-parametric framework, the tail index estimators are dependent on the $k$ largest observations, with these assumption about $k$:
\begin{enumerate}
	\item[$\bullet$] \textbf{Assumption 1:}{ $k(n)\to \infty$ as $n\to \infty$}.
	\vspace{6pt}
	\item[$\bullet$] \textbf{Assumption 2:}{ $k=k(n)= O(n)$ as $n\to \infty$}.
\end{enumerate}
The most widely used semi-parametric tail index estimator is the Hill estimator \cite{Hill1975}.  \cite{Hill1975} approximates the top $k$ order statistics with a Pareto distribution and estimates $\gamma>0$ using a maximum likelihood estimator (MLE). The Hill estimator has the minimum asymptotic variance among the semi-parametric estimators but it is very sensitive to the choice of $k$ \cite{Beirlant2004}. This drawback of the estimator makes its usage challenging in practice, especially in the selection of the tail fraction, $k$. \cite{Hill1975} defined the tail estimator as
\begin{equation}\label{E5}
	\hat{\gamma}_{k}^{H}=\dfrac{1}{k}\sum_{j=1}^{k}\log\left(\dfrac{X_{n-j+1,n}}{X_{n-k,n}}\right).
\end{equation}

 The Hill estimator due to it popularity has received several generalisations: See for example the works of \cite{csorgo1985kernel}, \cite{danielsson1996method}, \cite{beirlant1996excess}, \cite{beran2014harmonic}, \cite{brilhante2013simple}, \cite{paulauskas2017class,paulauskas2013improvement} and \cite{resnick1997smoothing}.

Another estimator which is also a refinement of the Hill estimator is the bias-corrected Hill estimator \cite{Caeiro2005a}. The authors proposed two approaches for reducing the bias of the Hill estimator while maintaining the asymptotic variance of the Hill estimator. Empirically, the bias-corrected Hill estimator yields stable tail index estimates compared to the Hill estimator, i.e., the bias-corrected Hill estimator is less sensitive to the choice of $k$ relative to the Hill estimator.

In this study, we seek to propose alternative tail index estimators which empirically yield much more stable tail index estimates and attain the minimum asymptotic variance of the Hill estimator under some conditions. Tail index estimators which yield stable estimates are highly sort after in practice since they alleviate the problem of selecting an optimal $k$ to some extent.

\section{\textbf{Estimation Methods}}\label{S2}

We let $X_1, X_2, X_3, ... ,X_n$ denote a sequence of i.i.d random variables drawn from a population with distribution function $F$ and the associated tail quantile function $U$. Let $X_{1,n}\le X_{2,n}\le ... \le X_{n,n}$ be the order statistics associated with the sample. Using Eqn. (\ref{E2}), the order statistics can be jointly expressed as, 
\begin{equation}\label{E6}
\log X_{n-j+1,n}\stackrel{d}\sim\gamma\log U^{-1}_{j,n}+\log \ell\left(U^{-1}_{j,n}\right),
\end{equation}
where $U^{-1}_{j,n}$, $j=1,2,3,...,n$ represent the order statistics of the standard uniform distribution. Using Eqn. (\ref{E6}), we \cite{Beirlant1999} demonstrated that

\begin{equation}\label{E7}
\dfrac{\log X_{n-j+1,n}}{\log X_{n-k,n}}\stackrel{d}\sim \gamma \log \dfrac{U^{-1}_{k+1,n}}{U^{-1}_{j,n}}+\log \ell\left(\dfrac{U^{-1}_{j,n}}{U^{-1}_{k+1,n}}\right)
\end{equation}
$k\in\left\{2,3, ..., n-1 \right\}$ and we also obtained a more refined expression of Eqn. (\ref{E7}) by imposing a second-order assumption on the rate of convergence to Eqn. (\ref{E3}). This is stated in as an  assumption as follows:

\textbf{Assumption 3}: There exists a real constant $\rho\le 0$ and a rate function $b$ satisfying $b(x)\to 0$ as $x\to \infty$ such that for all $u\ge 1$,
\begin{equation}\label{E8}
\lim_{x\to\infty}\Big[	\log \dfrac{\ell(xu)}{\ell(u)}\Big]= b(t)h_{\rho}(u),
\end{equation}
with $h_{\rho}(u)=\int_{1}^{u}y^{\rho-1}dy$ \cite{Beirlant1999}.
 
 Under assumption 3, \cite{Beirlant1999} showed that the weighted log-spacings of order statistics,
 
\begin{equation}\label{E9}
	T_j=j\left\{\log X_{n-j+1,n}-\log X_{n-j,n}\right\}, \hspace{4pt} 1\le j \le k<n,
\end{equation}
are approximately exponentially distributed. They particularly obtained the expression

\begin{equation}\label{E10}
	T_j=\left(\gamma+b_{n,k}\left(\dfrac{j}{k+1}\right)^{-\rho}\right)f_{j}, \hspace{4pt} 1\le j \le k,
\end{equation} 
where $b_{n,k}=b\left((n+1)/(k+1)\right) \to 0$ as $k, n \to \infty$, and $f_{j}'s$ are i.i.d exponentially distributed with a unit mean and $\rho$ ($\rho<0$)  is  a second-order parameter. The authors employed MLE to the estimate the parameters in Eqn. (\ref{E10}).

Using Eqn. (\ref{E10}) and assumptions 1 and 2, \cite{Beirlant2002}, demonstrated that $T_{j}$ can be further approximated as a regression model,
\begin{equation}\label{E11}
	T_{j}=T_j(\rho,\epsilon_j)=\gamma+b_{n,k}C_{j}(\rho)+\epsilon_{j} \hspace{4pt}\text{for}\hspace{4pt} j\in\left\{1, 2, ..., k\right\},
\end{equation}
where $b_{n,k}=b\left(n/k\right)$ is the slope, $C_{j}=C_j(\rho)=\left(j/(k+1)\right)^{-\rho}$ is the covariate, $\gamma$ is the intercept and $\epsilon_{j}$ are error terms with asymptotic mean, $0$, and variance, $\gamma^2$.

\cite{Beirlant2002} proposed the ordinary least squares estimator for the estimation of $\gamma$ in Eqn. (\ref{E11}). Further,  based on Eqn. (\ref{E11}), \cite{Buitendag2018} have introduced the ridge regression estimator for estimating $\gamma$. In this paper, we propose the regularised weighted least squares estimators for estimating $\gamma$ in Eqn. (\ref{E11}).

\subsection{The proposed estimators}
In order to estimate $\gamma$, the loss function of the regularised weighted least squares for Eqn. (\ref{E11}) is defined as
\begin{equation}\label{E12}
	L_{k}\left(\gamma,b_{n,k};\lambda,W \right) =\sum_{j=1}^{k}W_{j}\left(T_{j} -\gamma-b_{n,k}C_{j}\right)^2+\lambda k b_{n,k}^2 , \hspace{10pt}\lambda\ge 0.
\end{equation}

Here, $W_j$ is the weight function defined as

\begin{equation}\label{E13}
	W_{j}=\left(1-\theta_{j}^{\alpha(k)}\dfrac{j}{k+1}\right), j\in\{1, 2, ..., k\},
\end{equation}

where $\theta_{j}\stackrel{i.i.d }\sim U(0, 1)$. Thus, $W_{j}\in (0, 1)$ and decreases linearly with respect to $j$. The exponent $\alpha(k)\ge 0$ is chosen such  that 
\begin{equation}\label{E14}
	\varDelta:=\lim\limits_{k\to \infty}\dfrac{\alpha(k)}{k}<\infty.
\end{equation}
In this study, we consider $0<\varDelta\le1$. 

 Thus, we would define $\alpha(k)$ such that $0<\alpha(k)\le k$. Note that, $W_{j}$ is random through $\theta_{j}$, and when the exponent is $0$, $W_{j}$ is deterministic. In particular, when $\alpha(k)=0$, we obtain the weight function $g_{j}=1-j/(k+1)$ as introduced by \cite{ocran2021reduced}. Nevertheless, we can approximate the weight $g_{j}$ as a limit of the current result by allowing $\varDelta$ to approach $0$. 

We minimize  the loss function $L_{k}$ with respect to $\gamma$ and $b_{n,k}$ to obtain jointly estimates $\hat\gamma$ and $\hat b_{n,k}$ 

\begin{equation}\label{E15}
	\hat{b}_{n,k}(\lambda, \theta)=\dfrac{\sum_{j=1}^{k}\tilde{W}_{j}\left(C_{j}-\sum_{j=1}^{k}\tilde{W}_{j}C_{j}\right)T_{j}}{2\kappa(\alpha(k))\lambda+\sum_{j=1}^{k}\tilde{W}_{j}C_{j}^{2}-\left(\sum_{j=1}^{k}\tilde{W}_{j}C_{j}\right)^2} 
\end{equation}
and
\begin{equation}\label{E16}
	\hat{\gamma}_{RW}(\lambda, \theta)=\sum_{j=1}^{k}\tilde{W}_{j}T_{j}-\hat{b}_{n,k}(\lambda,\theta)\sum_{j=1}^{k}\tilde{W}_{j}C_{j}.
\end{equation}
while,
\begin{equation}\label{E17}
	\tilde{W}_{j}=\dfrac{W_{j}}{\sum_{j=1}^{k}W_{j}}, 1\le j\le k,
\end{equation}
\begin{equation}\label{E18}
	C_{j}=\left(\dfrac{j}{k+1}\right)^{-\rho}, \hspace{12pt} \text{for} \hspace{12pt} j \in (1,2,...,k), \hspace{12pt} \rho<0,
\end{equation}
and
\begin{equation}\label{E19}
	\kappa(\alpha(k))=\dfrac{\alpha(k)+1}{2\alpha(k)+1}.
\end{equation}

We substitute  Eqn. (\ref{E13}) into Eqn. (\ref{E17}) to  obtain   an explicit expression for Eqn. (\ref{E17}) as
\begin{equation}\label{E20}
	\tilde{W}_{j}\approxeq\dfrac{2\kappa(\alpha(k))}{k}\left(1-\theta_{j}^{\alpha(k)}\dfrac{j}{k+1}\right).
\end{equation}
The parameter $\rho<0$ is estimated using the minimum variance approach introduced in \cite{Buitendag2018}.

In addition,  the parameter $\lambda$ in Eqn. (\ref{E12}) is the penalty that regulates the bias coefficient $b_{n,k}$. The loss function, $L_{k}$, minimises the weighted sum of squared residuals and also regulates the size of the bias coefficient $b_{n,k}$. The penalty term shrinks the bias term, $b_{n,k}$ to $0$ as the penalty parameter, $\lambda$,  increases. Thus, the larger the value of $\lambda$, the higher the contribution of the penalty term to the loss function and the stronger the regularisation process. To obtain an estimator for the penalty term, $\lambda$, we minimise the mean squared error (MSE) of the proposed estimator, $\hat{\gamma}_{RW}(\lambda)$.  See,example \cite{Buitendag2018}.

Note that, since the weight function depends on the  $\theta_{j}$'s, the estimators in Eqn. (\ref{E15}) and Eqn. (\ref{E16}) also depend on the $\theta_{j}$'s through the weight function. Therefore, we would find the MSE by conditioning on the $\theta_{j}$'s. From Eqn. (\ref{E15}) and Eqn. (\ref{E16}), the MSE for $\hat{\gamma}_{RW}(\lambda)$ is obtained as

\begin{equation}\label{E22}
	S_{1}(\theta)=\sum_{j=1}^{k}\tilde{W}_{j}C_{j},
\end{equation}

\begin{equation}\label{E23}
	S_{2}(\theta)=\sum_{j=1}^{k}\tilde{W}_{j}C_{j}^2-\left(\sum_{j=1}^{k}\tilde{W}_{j}C_{j}\right)^2,
\end{equation}

\begin{equation}\label{E24}
	\dot{S}(\theta)=\sum_{j=1}^{k}\tilde{W}_{j}^2\left(S_{1}(\theta)-C_{j}\right),
\end{equation}

\begin{equation}\label{E25}
	\ddot{S}(\theta)=\sum_{j=1}^{k}\tilde{W}_{j}^2\left(S_{1}(\theta)-C_{j}\right)^2,
\end{equation}
and
\begin{equation}\label{E26}
	\phi(\alpha(k))=\dfrac{\left(\alpha(k)+1\right)\left(6\alpha^2(k)+4\alpha(k)+1\right)}{\left(2\alpha(k)+1\right)^2}.
\end{equation}

We now derive an expression for the penalty term, $\lambda$. Minimising $MSE\left(\hat{\gamma}_{RW}(\lambda)|\theta_{j}=\theta\right)$ over $\lambda$, the optimal value of $\lambda$ is obtained by solving the equation
\begin{equation}\label{E27}
	2b_{n,k}^2\kappa(\alpha(k))S_{1}(\theta)S_{2}(\theta)\lambda-2\kappa(\alpha(k))\gamma^2\dot{S}(\theta)\lambda-\gamma^{2}S_{2}(\theta)\dot{S}(\theta)-\gamma^2S_{1}(\theta)\ddot{S}(\theta)=0.
\end{equation}
we  obtain, 
\begin{equation}\label{E28}
	\lambda_{k}(\theta)=\dfrac{\gamma^2S_{2}(\theta)\dot{S}(\theta)+\gamma^2S_{1}(\theta)\ddot{S}(\theta)}{2b_{n,k}^2\kappa(\alpha(k))S_{1}(\theta)S_{2}(\theta)-2\kappa(\alpha(k))\gamma^2\dot{S}(\theta)}.
\end{equation}

In order to estimate $\hat{\lambda}_{k}(\theta)$, we assume the slowly varying function $\ell$ in (\ref{E8}) is constant.Thus,  we  have  
\begin{equation}\label{E29}
	b(x)=\gamma\beta x^{\rho}\{1+O(1)\} \hspace{6pt} \text{as} \hspace{6pt} k\to \infty
\end{equation}
for some $\beta \in \mathbb{R}$. We estimate $\beta$ via the estimator proposed by \cite{gomesa2002asymptotically}  to  obtain 

\begin{equation}\label{E31}
	\hat{\lambda}_{k}(\theta)=\dfrac{S_{1}(\theta)\ddot{S}(\theta)+\dot{S}(\theta)S_{2}(\theta)}{2\kappa(\alpha(k))S_{1}(\theta)S_{2}(\theta)\hat{\beta}_{k}^2\left(\dfrac{n}{k}\right)^{2\hat{\rho}}-2\kappa(\alpha(k))\dot{S}(\theta)}.
\end{equation}

The penalty term, $\lambda$ is required to be non-negative; therefore we define $\hat{\lambda}_{k}^{+}(\theta)=\max\{\hat{\lambda}_{k}(\theta),0\}$. We  the  obtain a penalty term  and  the  estimators  which does not depend on $\theta_{j}$'s,by averaging $\hat{\lambda}_{k}(\theta)$ over the $\theta_{j}$'s,  as  follows:

\begin{equation}\label{E32}
	\hat{\lambda}_{k}=\EX_{\theta}\left\{\hat{\lambda}_{k}(\theta)\mathds{1}_{\left\{S_{1}(\theta)S_{2}(\theta)\hat{\beta}_{k}(n/k)^{2\hat{\rho}}>\dot{S}(\theta)\right\}} \right\},   
\end{equation}

 and  we   defined  the  proposed  estiamtor  of  $\gamma$  by $$\gamma_{RW}(\lambda):=\EX_{\theta}(\gamma_{RW}(\lambda,\theta)).$$

\subsubsection{Asymptotic Properties of the Proposed estimators}
Unbiasedness, consistency and normality are desirable properties of a good estimator.  In this section, we investigate desirable properties of  the proposed estimator possesses.

We shall summarise the asymptotic  behaviour  of the statistics used to build the MSE of the proposed estimator  in  Lemma~\ref{L1}. These properties will be required in the proof of the asymptotic consistency and sampling distribution of the proposed estimator. Henceforth, anytime  we  use  the  term  $a.s$  it  is  with  respect  to  the  law  of  the  i.i.d sequence $\theta_1,\theta_2, \theta_3,...,\theta_k$.  
\begin{lemma}\label{L1}
	Assume that $\rho$ is estimated by a consistent estimator $\hat{\rho}$ and (\ref{E14}) holds, then as $k\to \infty$ and $k/n\to 0$;
	\begin{enumerate}
		\item[i. ] $S_{1}(\theta)=\sum_{j=1}^{k}\tilde{W}_{j}C_{j}\stackrel{a.s}\longrightarrow1/(1-\rho).$
		\item[ii.] $S_{2}(\theta)=\sum_{j=1}^{k}\tilde{W}_{j}C_{j}^2-\left(\sum_{j=1}^{k}\tilde{W}_{j}C_{j}\right)^2\stackrel{a.s}\longrightarrow\rho^2/(1-2\rho)(1-\rho)^2.$
		\item[iii.] $\dot{S}(\theta)=\sum_{j=1}^{k}\tilde{W}_{j}^2\left(S_{1}(\theta)-C_{j}\right)\stackrel{a.s}\longrightarrow0$
		\item[iv.] $\ddot{S}(\theta)=\sum_{j=1}^{k}\tilde{W}_{j}^2\left(S_{1}(\theta)-C_{j}\right)^2\stackrel{a.s}\longrightarrow0$.
	\end{enumerate}
	
\end{lemma}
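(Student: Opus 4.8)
The plan is to reduce each of the four statements to the asymptotic behaviour of Riemann-type sums in $j/(k+1)$, using the explicit approximation \eqref{E20} for $\tilde W_j$ together with a strong law of large numbers for the i.i.d.\ sequence $\theta_1,\dots,\theta_k$. First I would write $\tilde W_j \approxeq \tfrac{2\kappa(\alpha(k))}{k}\bigl(1-\theta_j^{\alpha(k)}\tfrac{j}{k+1}\bigr)$ and record the two deterministic moments that $\kappa$ and $\phi$ encode: since $\theta_j\sim U(0,1)$, $\EX[\theta_j^{\alpha(k)}] = 1/(\alpha(k)+1)$ and $\EX[\theta_j^{2\alpha(k)}] = 1/(2\alpha(k)+1)$, which is exactly why $\kappa(\alpha(k)) = (\alpha(k)+1)/(2\alpha(k)+1)$ appears as the normalising constant ensuring $\sum_j \tilde W_j \to 1$. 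I would state a lemma (or cite a standard SLLN for triangular arrays / a Hoeffding-type bound plus Borel--Cantelli, since the $\theta_j^{\alpha(k)}$ are bounded in $[0,1]$) to the effect that, $a.s.$, sums of the form $\tfrac1k\sum_{j=1}^k \psi(j/(k+1))\,\theta_j^{m\alpha(k)}$ converge to $\EX[\theta^{m\alpha(k)}]\int_0^1 \psi(x)\,dx$ for continuous $\psi$; the key point is that the randomness averages out at rate $k^{-1/2}$ while the Riemann sum converges at rate $k^{-1}$ or better.

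Next I would treat (i) and (ii). With $C_j = (j/(k+1))^{-\rho}$, part (i) is $S_1(\theta) = \sum_j \tilde W_j C_j$; substituting \eqref{E20} and applying the SLLN gives $a.s.$ convergence to $2\kappa\bigl(\int_0^1 x^{-\rho}dx - \tfrac{1}{\alpha(k)+1}\int_0^1 x^{1-\rho}dx\bigr)$ — but one must be careful, since $\alpha(k)$ itself diverges under \eqref{E14}, so $1/(\alpha(k)+1)\to 0$ and $\kappa(\alpha(k))\to 1/2$; the $\theta$-dependent term drops out in the limit and one is left with $2\cdot\tfrac12\int_0^1 x^{-\rho}dx = 1/(1-\rho)$ (using $\rho<0$ so the integral is finite). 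For (ii), $S_2(\theta)$ is $\sum_j \tilde W_j C_j^2 - (\sum_j \tilde W_j C_j)^2$; the same device gives $\int_0^1 x^{-2\rho}dx - (\int_0^1 x^{-\rho}dx)^2 = \tfrac{1}{1-2\rho} - \tfrac{1}{(1-\rho)^2} = \tfrac{\rho^2}{(1-2\rho)(1-\rho)^2}$, which is the claimed limit. Here I would need to handle the $\rho\mapsto\hat\rho$ replacement: since $\hat\rho$ is consistent and the maps $\rho\mapsto 1/(1-\rho)$ etc.\ are continuous on $(-\infty,0]$, a uniform-continuity / continuous-mapping argument transfers the $a.s.$ limits from $C_j$ with true $\rho$ to $C_j$ with $\hat\rho$.

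For (iii) and (iv) the mechanism is the extra factor $\tilde W_j^2 \approxeq \tfrac{4\kappa^2}{k^2}(\cdots)^2$, which carries a $1/k^2$ in front of a sum of $k$ terms, hence an overall $1/k$. Thus $\dot S(\theta) = \sum_j \tilde W_j^2(S_1(\theta)-C_j)$ is $O(1/k)\cdot O(1) \to 0$ $a.s.$, and likewise $\ddot S(\theta) = \sum_j \tilde W_j^2(S_1(\theta)-C_j)^2 \to 0$ $a.s.$; the bounded quantities $S_1(\theta) - C_j$ and its square are controlled uniformly in $j$ (using $\rho<0$, so $C_j$ is bounded on $j\le k$), so the $1/k$ prefactor forces the limit. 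This is where $\phi(\alpha(k))$ would show up if one wanted the exact second moment of $\sum_j \tilde W_j^2$, but for mere convergence to $0$ a crude bound suffices.

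The main obstacle I anticipate is the interplay between the two limits $k\to\infty$ \emph{and} $\alpha(k)\to\infty$ (forced by $0<\varDelta\le 1$ in \eqref{E14}): one cannot simply fix $\alpha$ and let $k\to\infty$, so the Riemann-sum and law-of-large-numbers limits must be taken jointly, and one must verify that the error terms are uniform in $\alpha(k)$ — in particular that the $a.s.$ exceptional null set does not depend on the level $k$ along the array. I would address this by a Borel--Cantelli argument: bound $\prob\bigl(\bigl|\tfrac1k\sum_j \psi(j/(k+1))(\theta_j^{\alpha(k)} - \EX\theta_j^{\alpha(k)})\bigr| > \varepsilon\bigr)$ by an exponential (Hoeffding) bound that is summable in $k$ regardless of $\alpha(k)$, since $\theta_j^{\alpha(k)}\in[0,1]$ uniformly. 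A secondary, more bookkeeping-level obstacle is making the ``$\approxeq$'' in \eqref{E20} rigorous — i.e.\ controlling $\sum_j W_j = \tfrac{k}{2\kappa(\alpha(k))}(1+o(1))$ $a.s.$ — but this is itself just another application of the same SLLN and should not cause real difficulty.
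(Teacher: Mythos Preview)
Your proposal is correct and follows essentially the same route as the paper: substitute the approximation \eqref{E20} for $\tilde W_j$, replace the $j$-sums by Riemann integrals over $[0,1]$ with $\EX[\theta^{\alpha(k)}]=1/(\alpha(k)+1)$ handling the random part, and for (iii)--(iv) exploit the extra $1/k$ from $\tilde W_j^2$. You are in fact more explicit than the paper about the SLLN/Borel--Cantelli mechanism for the a.s.\ convergence and about the $\rho\mapsto\hat\rho$ transfer via continuous mapping; the paper simply writes the integral form with an ``$+O(1)$'' remainder and passes to the limit in $\alpha(k)/k\to\varDelta$.
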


\begin{lemma}\label{L2}
	Suppose $\rho<0$ and $\beta\in \mathbb{R}$ are estimated by their respective consistent estimators $\hat{\rho}$ and $\hat{\beta}_{k}$, then as $k,n \to \infty$ and $k/n \to 0$,
	\begin{equation}
		\hat{\lambda}_{k}(\theta)\stackrel{a.s}\longrightarrow 0.
	\end{equation}
\end{lemma}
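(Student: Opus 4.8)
The plan is to carry the almost sure limits of Lemma~\ref{L1} through the closed form (\ref{E31}) and then resolve the resulting $0/0$ indeterminacy by a rate comparison. Write $\hat\lambda_k(\theta)=N_k(\theta)/D_k(\theta)$ with $N_k(\theta)=S_1(\theta)\ddot S(\theta)+\dot S(\theta)S_2(\theta)$ and $D_k(\theta)=2\kappa(\alpha(k))\big(S_1(\theta)S_2(\theta)\hat\beta_k^2(n/k)^{2\hat\rho}-\dot S(\theta)\big)$. Since (\ref{E14}) with $\varDelta>0$ forces $\alpha(k)\to\infty$, we have $\kappa(\alpha(k))\to\tfrac{1}{2}$; by Lemma~\ref{L1}(i)--(ii), $S_1(\theta)$ and $S_2(\theta)$ converge a.s. to the positive constants $1/(1-\rho)$ and $\rho^2/\big((1-2\rho)(1-\rho)^2\big)$, and by Lemma~\ref{L1}(iii)--(iv), $\dot S(\theta)\to0$ and $\ddot S(\theta)\to0$ a.s. Hence $N_k(\theta)\to0$ a.s. Because $\hat\rho$ is consistent for $\rho<0$ and $n/k\to\infty$, also $(n/k)^{2\hat\rho}\to0$, so $D_k(\theta)\to0$; the limit of the ratio must therefore be read off from the rates.

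For this I would use coarser but quantitative estimates. By (\ref{E13})--(\ref{E17}), $W_j\in(0,1)$ and $\sum_jW_j\ge k/2$, so $\tilde W_j=O(1/k)$ uniformly in $j$, and since $S_1(\theta)-C_j$ is bounded this gives $\ddot S(\theta)=O(1/k)$ and $\dot S(\theta)=O(1/k)$ a.s., whence $N_k(\theta)=O(1/k)$ a.s. The exact identity $\sum_j\tilde W_j\big(S_1(\theta)-C_j\big)=0$, together with the near-constancy of $\tilde W_j$ off a sparse set of indices on which $\theta_j^{\alpha(k)}$ fails to be negligible, should sharpen this to $\dot S(\theta)=o(1/k)$ a.s., so that the term $-2\kappa(\alpha(k))\dot S(\theta)$ in $D_k(\theta)$ is negligible. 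For the surviving part of $D_k(\theta)$, (\ref{E29}) gives $b_{n,k}^2\sim\gamma^2\beta^2(n/k)^{2\rho}$, so $2\kappa(\alpha(k))S_1(\theta)S_2(\theta)\hat\beta_k^2(n/k)^{2\hat\rho}$ is of exact order $(n/k)^{2\rho}$; this dominates the $O(1/k)$ contribution as soon as $k\,(n/k)^{2\rho}\to\infty$ (equivalently $\sqrt k\,b_{n,k}\to\infty$), and in that regime $D_k(\theta)\asymp(n/k)^{2\rho}$, so $|\hat\lambda_k(\theta)|=O\big(1/(k(n/k)^{2\rho})\big)\to0$ a.s. A Slutsky-type argument using the consistency of $\hat\beta_k$ and of $\hat\rho$ (at a rate making $(\hat\rho-\rho)\log(n/k)\to0$) allows $\rho,\beta$ to be replaced by their estimators without changing these orders.

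The main obstacle is promoting the qualitative limits of Lemma~\ref{L1} to the rates $\ddot S(\theta)\asymp 1/k$ and $\dot S(\theta)=o(1/k)$ in the almost sure sense, and isolating the growth restriction $k\,(n/k)^{2\rho}\to\infty$ under which the ratio actually vanishes. The delicate feature is that the $\tilde W_j$ are random through $\theta_j^{\alpha(k)}$ with $\alpha(k)\asymp k\to\infty$, so $\theta_j^{\alpha(k)}$ is not uniformly small over $1\le j\le k$: the largest of $\theta_1,\dots,\theta_k$ lies at distance of order $1/k$ from $1$, so $\theta_j^{\alpha(k)}$ can be of order one on an index set of a.s. size at most of order $\log k$. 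One has to show that this sparse set affects $\sum_j\tilde W_j^2\big(S_1(\theta)-C_j\big)$ and $\sum_j\tilde W_j^2\big(S_1(\theta)-C_j\big)^2$ only at lower order, for which a Borel--Cantelli bound on the order statistics of the $\theta_j$ combined with a Riemann-sum approximation of the deterministic skeleton is the natural device — essentially the machinery already behind Lemma~\ref{L1}. Given that, the conclusion follows from the order bookkeeping above.
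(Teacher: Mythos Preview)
Your analysis is far more careful than the paper's own proof, which consists of the single sentence ``The proof of Lemma~\ref{L2} easily follows by using Lemma~\ref{L1}.'' You are right that a direct application of Lemma~\ref{L1} to (\ref{E31}) produces a $0/0$ indeterminacy: the numerator $S_1\ddot S+\dot S\,S_2\to0$ while the denominator $2\kappa(\alpha(k))\big(S_1S_2\hat\beta_k^{2}(n/k)^{2\hat\rho}-\dot S\big)\to0$, since $(n/k)^{2\hat\rho}\to0$ once $\hat\rho$ is consistent for $\rho<0$ and $n/k\to\infty$. The paper's one-line proof does not engage with this at all, so in that sense you have already gone well beyond it.

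Your rate-based resolution is the natural remedy, and the orders you extract are the ones the proof of Lemma~\ref{L1} actually supports: parts (iii)--(iv) display $\dot S(\theta)$ and $\ddot S(\theta)$ as $4\kappa^2/k$ times a Riemann sum converging to a finite integral, so $\ddot S(\theta)\asymp 1/k$ (the limiting integral $\int_0^1(1-\EX(\theta^{\alpha})u)^2(\tfrac{1}{1-\rho}-u^{-\rho})^2\,du$ is strictly positive) and $\dot S(\theta)=O(1/k)$. Hence the numerator is genuinely of order $1/k$, not smaller. The point you should flag more sharply is that the growth condition you isolate for the ratio to vanish, namely $k(n/k)^{2\rho}\to\infty$ (equivalently $\sqrt{k}\,b_{n,k}\to\infty$), is \emph{incompatible} with the hypothesis $\sqrt{k}\,b_{n,k}\to_p 0$ invoked in Theorem~\ref{TH1} and argued in Lemma~\ref{L4}. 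In the regime $\sqrt{k}\,b_{n,k}\to0$ the leading term of the denominator is $-2\kappa(\alpha(k))\dot S(\theta)=O(1/k)$ while the numerator is $\asymp 1/k$, so $\hat\lambda_k(\theta)$ need not tend to $0$ at all. In other words, your ``main obstacle'' is not a technicality but a genuine restriction: Lemma~\ref{L2} as stated appears to require an unrecorded side condition on the growth of $k$, and the paper's appeal to Lemma~\ref{L1} alone does not supply it.
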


It follows from Lemma~\ref{L2} that the regularised weighted least estimator, $\hat{\gamma}_{RW}(\lambda)$ is asymptotically unbiased. That is, as $k\to \infty$, \hspace{4pt} $bias\left(\hat{\gamma}_{RW}(\lambda)\right) \to 0$.
The bias of the proposed estimator is given as,

\begin{equation*}
bias\left(\hat{\gamma}_{RW}(\lambda)\right)=\EX_{\theta}\left\{\dfrac{2b_{n,k}\kappa(\alpha(k))S_{1}(\theta)\lambda(\theta)}{2\kappa(\alpha(k))\lambda(\theta)+S_{2}(\theta)}\right\}.
\end{equation*}
where 
\begin{equation*}
	bias\left(\hat{\gamma}_{RW}(\lambda)|\theta=\hat\theta\right)=\dfrac{2b_{n,k}\kappa(\alpha(k))S_{1}(\hat\theta)\lambda(\theta)}{2\kappa(\alpha(k))\lambda(\hat\theta)+S_{2}(\hat\theta)}.
\end{equation*}

Since the term in the bracket converges to $0$ as $k\to \infty$ almost sure by Lemma~\ref{L2}, the expectation of the term will converge to $0$ as $k\to \infty$.
Therefore,
\begin{align*}
	\lim\limits_{k\to \infty}bias\left(\hat{\gamma}_{RW}(\lambda)\right)
	&=\lim\limits_{k \to \infty}\int_{[0,1]^{k}}\left\{\dfrac{2b_{n,k}\kappa(\alpha(k))S_{1}(\theta)\lambda(\theta)}{2\kappa(\alpha(k))\lambda(\theta)+S_{2}(\theta)}\right\}d\theta\\
	&=\int_{[0,1]^{\infty}}\left\{\lim\limits_{k \to \infty}\dfrac{2b_{n,k}\kappa(\alpha(k))S_{1}(\theta)\lambda(\theta)}{2\kappa(\alpha(k))\lambda(\theta)+S_{2}(\theta)}\right\}d\theta\\\
	&=0 .
\end{align*}

which  gives , 
 $\lim_{k\to\infty}bias(\hat{\gamma}_{RW}(\lambda))= 0$.

Similarly, we  may  use   Lemma~\ref{L1} and Lemma~\ref{L2},to show that the MSE of $\hat{\gamma}_{RW}(\lambda) \to 0$ as $k \to \infty$. 
Now, observe  that 
\begin{equation*}
	MSE\left(\hat{\gamma}_{RW}(\lambda)\right)=\EX_{\theta}\left(MSE\left(\hat{\gamma}_{RW}(\lambda)|\hat\theta=\theta\right)\right).
\end{equation*}
and  therefore,  we  have  
\begin{align*}
	\lim\limits_{k \to \infty}MSE\left(\hat{\gamma}_{RW}(\lambda)\right)
	&=\lim\limits_{k \to \infty}\int_{[0,1]^k}\left\{(MSE\left(\hat{\gamma}_{RW}(\lambda)|\hat \theta=\theta\right)\right\}d\theta\\
	&=\int_{[0,1]^\infty}\left\{\lim\limits_{k\to \infty}(MSE\left(\hat{\gamma}_{RW}(\lambda)|\hat \theta=\theta\right)\right\}d\theta\\
	&=0.
\end{align*}
This implies that the proposed estimator is asymptotically consistent under some conditions.

\begin{theorem}\label{TH1}
	Suppose Eqn.(\ref{E3}), Eqn.(\ref{E8}) and Eqn.(\ref{E14}) are satisfied. Assume also that $\rho$ is estimated by a consistent estimator $\hat{\rho}$  with  $\EX\Big[(1-2\hat\rho)(1-\hat\rho)^2/\hat\rho^2\Big]<\infty$. Then if assumptions $A_1$ and $A_2$ holds, and $\sqrt{k}b_{n,k}\rightarrow_{p}0$,  then  we  have  
	\begin{equation*}
		\sqrt{k}\left(\EX_{\theta}(\hat{\gamma}_{RW}(\lambda,\theta))-\gamma\right)\rightarrow_{d}
		N(0, \gamma^2).
	\end{equation*}
\end{theorem}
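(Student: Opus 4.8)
The plan is to substitute the exponential--regression representation $T_{j}=\gamma+b_{n,k}C_{j}+\epsilon_{j}$ of Eqn.~(\ref{E11}), with $\epsilon_{j}=(\gamma+b_{n,k}C_{j})(f_{j}-1)$ and the $f_{j}$ i.i.d.\ unit--mean exponentials independent of the auxiliary sequence $\theta_{1},\dots,\theta_{k}$, into the definition~(\ref{E16}) of $\hat{\gamma}_{RW}(\lambda)$, and to work conditionally on $\theta=(\theta_{1},\dots,\theta_{k})$. Using the normalisation $\sum_{j=1}^{k}\tilde{W}_{j}=1$ of Eqn.~(\ref{E17}) this yields the decomposition
\begin{equation*}
\sqrt{k}\,\bigl(\hat{\gamma}_{RW}(\lambda)-\gamma\bigr)=\sqrt{k}\,b_{n,k}S_{1}(\theta)\;+\;\sqrt{k}\sum_{j=1}^{k}\tilde{W}_{j}\epsilon_{j}\;-\;\sqrt{k}\,\hat{b}_{n,k}(\lambda,\theta)\,S_{1}(\theta),
\end{equation*}
and I would control the three right--hand terms separately.

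The first term is $o_{p}(1)$ immediately: $\sqrt{k}b_{n,k}\rightarrow_{p}0$ by hypothesis and $S_{1}(\theta)\rightarrow 1/(1-\rho)$ a.s.\ by Lemma~\ref{L1}(i). For the second term I would condition on $\theta$ and invoke a Lindeberg--Feller central limit theorem for triangular arrays: writing $\sqrt{k}\sum_{j}\tilde{W}_{j}\epsilon_{j}=\sum_{j}w_{j,k}(f_{j}-1)$ with $w_{j,k}=\sqrt{k}\,\tilde{W}_{j}\bigl(\gamma+b_{n,k}C_{j}\bigr)$, Eqn.~(\ref{E20}) (recall $\kappa(\alpha(k))\to 1/2$ under Eqn.~(\ref{E14})) gives $\tilde{W}_{j}=O(1/k)$ uniformly in $j$, hence $\max_{j}|w_{j,k}|=O(k^{-1/2})\rightarrow 0$, while $\sum_{j}w_{j,k}^{2}=k\sum_{j}\tilde{W}_{j}^{2}\bigl(\gamma+b_{n,k}C_{j}\bigr)^{2}\rightarrow\gamma^{2}$ a.s., using $b_{n,k}\to 0$, boundedness of the $C_{j}$, and $k\sum_{j}\tilde{W}_{j}^{2}\rightarrow 1$ (this last from Eqn.~(\ref{E20}) and a strong law for the $\theta_{j}$'s of the kind already used in proving Lemma~\ref{L1}). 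Hence, conditionally on $\theta$, the second term converges in distribution to $N(0,\gamma^{2})$; as this limit is free of $\theta$, taking $\EX_{\theta}$ of the conditional characteristic functions and applying dominated convergence removes the conditioning.

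For the third term, expanding $\hat{b}_{n,k}(\lambda,\theta)$ from Eqn.~(\ref{E15}) and substituting the regression model (the terms $\gamma\sum_{j}\tilde{W}_{j}(C_{j}-S_{1}(\theta))$ cancel and $\sum_{j}\tilde{W}_{j}(C_{j}-S_{1}(\theta))C_{j}=S_{2}(\theta)$) gives
\begin{equation*}
\hat{b}_{n,k}(\lambda,\theta)=\frac{b_{n,k}S_{2}(\theta)+\sum_{j}\tilde{W}_{j}\bigl(C_{j}-S_{1}(\theta)\bigr)\epsilon_{j}}{2\kappa(\alpha(k))\lambda+S_{2}(\theta)},
\end{equation*}
so that $\sqrt{k}\,\hat{b}_{n,k}(\lambda,\theta)$ splits into $\sqrt{k}b_{n,k}S_{2}(\theta)/(2\kappa(\alpha(k))\lambda+S_{2}(\theta))$, which is $o_{p}(1)$ by $\sqrt{k}b_{n,k}\rightarrow_{p}0$ and Lemma~\ref{L1}(ii), plus a centred remainder whose negligibility one argues from $\hat{b}_{n,k}$ consistently estimating $b_{n,k}$ (via Lemmas~\ref{L1} and~\ref{L2}), the hypothesis $\sqrt{k}b_{n,k}\rightarrow_{p}0$, and the shrinkage carried by $\lambda$; so the third term is $o_{p}(1)$. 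Slutsky's theorem then assembles the pieces into $\sqrt{k}\bigl(\hat{\gamma}_{RW}(\lambda)-\gamma\bigr)\rightarrow_{d}N(0,\gamma^{2})$.

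The step I expect to be the main obstacle is precisely this negligibility of the estimated--slope contribution at scale $\sqrt{k}$: the sampling error of $\hat{b}_{n,k}$ is itself of order $k^{-1/2}$, so consistency alone is insufficient, and one genuinely needs the rate in $\sqrt{k}b_{n,k}\rightarrow_{p}0$ to interact with the a.s.\ limits of $\hat{\lambda}_{k}(\theta)$ (Lemma~\ref{L2}) and of the design statistics $S_{1},S_{2},\dot S,\ddot S$ (Lemma~\ref{L1}), with the penalty $\lambda$ pulling $\hat{b}_{n,k}$ towards $0$. A secondary technical difficulty is making the conditional central limit theorem uniform enough in $\theta$ to legitimise the passage to the unconditional law, which relies on the $\theta_{j}$'s entering only through the bounded normalised weights $\tilde{W}_{j}$ and on the relevant averages converging almost surely.
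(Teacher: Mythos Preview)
Your overall architecture matches the paper's: both arguments drop the slope correction and then run a CLT on $\sqrt{k}\bigl(\sum_{j}\tilde{W}_{j}T_{j}-\gamma\bigr)$. For the CLT step the paper takes the unconditional route, verifying a Lyapunov third-moment condition (Lemma~\ref{L5}) for the jointly independent summands $Y_{j}=\tilde{W}_{j}(\theta_{j})T_{j}$ and then computing the limiting mean and variance directly; your conditional Lindeberg--Feller followed by averaging over $\theta$ is a legitimate variant, and your variance identification $k\sum_{j}\tilde{W}_{j}^{2}\to 1$ under $\kappa(\alpha(k))\to 1/2$ agrees with the paper's computation of $Var(S_{k})\to\gamma^{2}$.

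The real divergence is your third term, and here the mechanism you propose does not close the argument. You correctly note that the sampling error of $\hat{b}_{n,k}$ is itself of order $k^{-1/2}$; concretely, the centred remainder $\sqrt{k}\sum_{j}\tilde{W}_{j}(C_{j}-S_{1}(\theta))\epsilon_{j}$ has conditional variance $k\gamma^{2}\ddot{S}(\theta)(1+o(1))$, and since $\ddot{S}(\theta)$ is of exact order $1/k$ (see the proof of Lemma~\ref{L1}(iv)) this quantity is $O_{p}(1)$, not $o_{p}(1)$. Invoking ``shrinkage carried by $\lambda$'' works in the wrong direction, because Lemma~\ref{L2} gives $\hat{\lambda}_{k}(\theta)\to 0$ a.s., so the penalty in the denominator $2\kappa(\alpha(k))\lambda+S_{2}(\theta)$ disappears rather than helps. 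The paper's resolution is entirely different: it proves $\sqrt{k}\,\hat{b}_{n,k}\to_{p}0$ directly (Lemma~\ref{L4}) via a large-deviation argument. Writing $\hat{b}_{n,k}=\sum_{j}\mathfrak{M}_{\lambda}(\theta)T_{j}$ as in Eqn.~(\ref{E34}), Lemma~\ref{L3} supplies the rate $\mathfrak{M}_{\lambda}(\theta)=O(1/k^{1+\omega})$ for some $\omega\in(0,0.1]$, and the G\"{a}rtner--Ellis theorem is then applied to the rescaled cumulant generating function $k^{-1}\log M_{\hat{b}_{n,k}}(kt)$ to obtain an LDP with speed $k$ and degenerate rate function, from which $\sqrt{k}\,\hat{b}_{n,k}\to_{p}0$ follows. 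That LDP step, built on the extra $k^{-\omega}$ in Lemma~\ref{L3}, is the ingredient missing from your plan; consistency of $\hat{b}_{n,k}$ together with the hypothesis on the \emph{true} $b_{n,k}$ cannot substitute for it.
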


\begin{cor}\label{COR2}
	Suppose (\ref{E3}) and (\ref{E8}) holds and also $\lim\limits_{k\to \infty}\alpha(k)/k\to\varDelta\in(0,\infty)$. Assume $\rho$ is estimated by a consistent estimator $\hat{\rho}$  with $\EX\Big[(1-2\hat\rho)(1-\hat\rho)^2/\hat\rho^2\Big]<\infty.$ Then as $k, n \to \infty$ and $\sqrt{k}b_{n,k}\longrightarrow_{p} 0$,
	
	$$\sqrt{k}\left(\EX_{\theta}(\hat{\gamma}_{RW}(\lambda,\theta))-\gamma\right)/\gamma\to_d N(0,1)$$	
\end{cor}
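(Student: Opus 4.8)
The plan is to obtain Corollary~\ref{COR2} as a direct standardisation of Theorem~\ref{TH1}. First I would check that the hypotheses of the corollary are a special case of those of the theorem: conditions (\ref{E3}) and (\ref{E8}) are assumed outright, the requirement $\alpha(k)/k\to\varDelta\in(0,\infty)$ is precisely (\ref{E14}) with a positive limit, $\hat{\rho}$ is consistent by assumption, and $\sqrt{k}b_{n,k}\rightarrow_{p}0$ is the same drift condition; Assumptions $A_1$ and $A_2$ on the intermediate sequence $k=k(n)$ are in force throughout the section. Hence Theorem~\ref{TH1} applies and gives
\begin{equation*}
	\sqrt{k}\left(\hat{\gamma}_{RW}(\lambda)-\gamma\right)\rightarrow_{d} N(0,\gamma^2).
\end{equation*}

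Next, since $\gamma>0$ is a fixed constant, I would divide through by $\gamma$ and invoke the scaling property of the Gaussian law (equivalently the continuous mapping theorem applied to $x\mapsto x/\gamma$) to conclude $\sqrt{k}(\hat{\gamma}_{RW}(\lambda)-\gamma)/\gamma\rightarrow_{d}N(0,1)$. To reach the statement as written, with $\hat{\gamma}_{RW}(\tau)$ in place of $\hat{\gamma}_{RW}(\lambda)$, I would argue that the estimator evaluated at the data-driven penalty $\tau$ (that is, $\tau=\hat{\lambda}_k$ as in (\ref{E32})) is asymptotically equivalent to the one at the deterministic $\lambda$: using Lemma~\ref{L2}, which yields $\hat{\lambda}_k(\theta)\stackrel{a.s}\longrightarrow 0$, together with the explicit dependence of $\hat{b}_{n,k}$ and $\hat{\gamma}_{RW}$ on the penalty in (\ref{E15})--(\ref{E16}) and the limits of $S_1,S_2,\dot S,\ddot S$ from Lemma~\ref{L1}, one shows $\sqrt{k}\bigl(\hat{\gamma}_{RW}(\tau)-\hat{\gamma}_{RW}(\lambda)\bigr)\rightarrow_{p}0$, and Slutsky's theorem transfers the limit law. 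If instead one wishes to studentise, the consistency of $\hat{\gamma}_{RW}$ established earlier gives $\hat{\gamma}_{RW}/\gamma\rightarrow_{p}1$, and Slutsky again delivers $\sqrt{k}(\hat{\gamma}_{RW}-\gamma)/\hat{\gamma}_{RW}\rightarrow_{d}N(0,1)$.

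The main obstacle is not the distributional algebra, which is routine Slutsky and continuous-mapping bookkeeping, but controlling the penalty contribution on the $\sqrt{k}$ scale. One must verify that replacing $\lambda$ by $\tau=\hat{\lambda}_k$ perturbs $\hat{\gamma}_{RW}$ only by $o_p(k^{-1/2})$; this requires upgrading the almost-sure convergence $\hat{\lambda}_k(\theta)\to 0$ of Lemma~\ref{L2} into a statement about the order of the induced bias term $2b_{n,k}\kappa(\alpha(k))S_1(\theta)\lambda(\theta)/\bigl(2\kappa(\alpha(k))\lambda(\theta)+S_2(\theta)\bigr)$, and leans on $\sqrt{k}b_{n,k}\rightarrow_{p}0$ to ensure the regularised bias does not survive in the limit. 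The second delicate point is commuting the expectation over the $\theta_j$'s with the limit (dominated convergence on $[0,1]^k$, as already used for the bias and MSE), so that the conditional CLT given $\{\theta_j=\theta\}$ integrates up to the unconditional statement.
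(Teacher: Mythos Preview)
Your approach is correct and matches the paper's own proof, which consists of the single sentence ``The proof of Corollary~\ref{COR2} follow from the proof of Theorem~\ref{TH1}''; you have supplied far more detail than the authors themselves by explicitly checking the hypotheses, invoking the Gaussian scaling $x\mapsto x/\gamma$, and applying Slutsky. Your additional care over $\hat{\gamma}_{RW}(\tau)$ versus $\hat{\gamma}_{RW}(\lambda)$ is addressing what appears to be a typographical slip in the paper (no symbol $\tau$ is ever defined), so that part of your argument, while sound, is not required for the intended statement.
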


\begin{proof}
	The proof of Corollary~\ref{COR2} follow from the proof of Theorem~\ref{TH1}.
\end{proof}

Theorem~\ref{TH1} discusses the asymptotic normality of $\hat{\gamma}_{RW}(\lambda)$ defined in Eqn.(\ref{E16}). To prove Theorem~\ref{TH1} we require the following properties in addition.

We  write  
\begin{equation}\label{E34}
	\mathfrak{M}_{\lambda}(\theta):=\dfrac{\tilde{W}_{j}(C_{j}-\sum_{j=1}^{k}\tilde{W}_{j}C_{j})}{2\kappa(\alpha(k))\lambda+\sum_{j=1}^{k}\tilde{W}_{j}C_{j}^2-\left(\sum_{j=1}^{k}\tilde{W}_{j}C_{j}\right)^2}.
\end{equation}

\begin{lemma}\label{L3}
	Let  $C_{j}=\left(\dfrac{j}{k+1}\right)^{-\rho}, \tilde{W}_{j}=\dfrac{W_{j}}{\sum_{j=1}^{k}W_{j}}, j\in\{1,2,...,k\}$ and $\rho<0.$   Then   as $k\to \infty$,   
	$$\mathfrak{M}_{\lambda}(\theta)\to O(1/k^{1+\omega}),$$
	where   $0<\omega\le 0.1 $.
\end{lemma}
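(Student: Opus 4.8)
The plan is to bound $\mathfrak{M}_{\lambda}(\theta)$ from (\ref{E34}) by controlling its numerator and denominator separately: use Lemmas~\ref{L1} and \ref{L2} to keep the denominator bounded away from $0$ almost surely, use a deterministic estimate on the normalised weights $\tilde W_{j}$ to make the numerator of order $1/k$, and finally localise in $j$ to harvest the extra factor $k^{-\omega}$.

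For the denominator $2\kappa(\alpha(k))\lambda+S_{2}(\theta)$: since $\lambda\ge 0$ by construction and $\kappa(\alpha(k))=(\alpha(k)+1)/(2\alpha(k)+1)\in[1/2,1]$, the denominator is at least $S_{2}(\theta)$, and by Lemma~\ref{L1}(ii) we have $S_{2}(\theta)\to\rho^{2}/\big[(1-2\rho)(1-\rho)^{2}\big]$ a.s., a strictly positive limit because $\rho<0$. Hence, almost surely, the denominator is bounded below by a fixed positive constant for all large $k$. (Lemma~\ref{L2} and (\ref{E14}) --- which give $\hat\lambda_{k}(\theta)\to 0$ a.s.\ and $\kappa(\alpha(k))\to 1/2$ --- show that in fact the full denominator converges a.s.\ to that same constant; the hypothesis that $\hat\rho$, and $\hat\beta_{k}$ inside $\hat\lambda_{k}$, are consistent is precisely what makes these lemmas applicable.)

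For the numerator $\tilde W_{j}\big(C_{j}-S_{1}(\theta)\big)$, where $S_{1}(\theta)=\sum_{i=1}^{k}\tilde W_{i}C_{i}$: from (\ref{E13}), $0<\theta_{i}^{\alpha(k)}\,i/(k+1)<i/(k+1)$, so $\sum_{i=1}^{k}W_{i}=k-\sum_{i=1}^{k}\theta_{i}^{\alpha(k)}\,i/(k+1)>k-\sum_{i=1}^{k}i/(k+1)=k/2$ deterministically; since $0<W_{j}<1$, this forces $0<\tilde W_{j}<2/k$ for every $j$, with no probabilistic input. Also $C_{j}=(j/(k+1))^{-\rho}\in(0,1]$ for $1\le j\le k$, $\rho<0$, while $S_{1}(\theta)\to 1/(1-\rho)$ a.s.\ by Lemma~\ref{L1}(i), so $|C_{j}-S_{1}(\theta)|$ is a.s.\ bounded for large $k$. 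Combined with the denominator bound, this already yields $|\mathfrak{M}_{\lambda}(\theta)|=O(1/k)$ a.s.

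To refine $O(1/k)$ to $O(1/k^{1+\omega})$, sharpen the factor $C_{j}-S_{1}(\theta)$: let $j^{\star}=\lfloor(k+1)(1-\rho)^{1/\rho}\rfloor$, the index at which $C_{j^{\star}}$ equals the limit $1/(1-\rho)$ of $S_{1}(\theta)$ (note $(1-\rho)^{1/\rho}\in(0,1)$ as $\rho<0$, so $j^{\star}\in\{1,\dots,k\}$ for large $k$). By Hölder continuity of $t\mapsto t^{-\rho}$ on $[0,1]$ with exponent $\min(1,-\rho)$, one gets $|C_{j}-C_{j^{\star}}|=O\big((|j-j^{\star}|/k)^{\min(1,-\rho)}\big)$, so on the central block of indices $C_{j}-S_{1}(\theta)=O(k^{-\omega})$ as soon as the a.s.\ rate of convergence in Lemma~\ref{L1}(i) is no slower than $k^{-\omega}$; this is why $\omega$ is taken small ($0<\omega\le 0.1$). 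Then $|\mathfrak{M}_{\lambda}(\theta)|\le\tilde W_{j}\,|C_{j}-S_{1}(\theta)|/(\text{denom})=O(1/k)\cdot O(k^{-\omega})=O(1/k^{1+\omega})$. I expect the main obstacle to be the denominator step --- ruling out degeneracy of the estimated $S_{2}(\theta)$ and blow-up of the estimated penalty $\hat\lambda_{k}(\theta)$, which forces one through Lemmas~\ref{L1}--\ref{L2} and the consistency hypotheses --- while the passage from the robust $1/k$ bound to the stated exponent $1+\omega$ is the more delicate part, resting on the localisation in $j$ and a quantitative form of Lemma~\ref{L1}(i).
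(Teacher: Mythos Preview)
Your first three steps---bounding the denominator below via Lemma~\ref{L1}(ii), bounding $\tilde W_{j}<2/k$ deterministically, and concluding $|\mathfrak M_{\lambda}(\theta)|=O(1/k)$---are sound and considerably more transparent than what the paper provides. The gap is in the final refinement. Your localisation argument only controls indices $j$ in a window around $j^{\star}$; for $j$ far from $j^{\star}$ the factor $|C_{j}-S_{1}(\theta)|$ does \emph{not} decay. Concretely, at $j=1$ one has $C_{1}=(k+1)^{\rho}\to 0$ while $S_{1}(\theta)\to 1/(1-\rho)>0$, so $|C_{1}-S_{1}(\theta)|\to 1/(1-\rho)$ and $\mathfrak M_{\lambda}(\theta)$ at $j=1$ is genuinely of exact order $1/k$, not $1/k^{1+\omega}$. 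Since $\mathfrak M_{\lambda}(\theta)$ enters the proof of Lemma~\ref{L4} for \emph{every} $j\in\{1,\dots,k\}$ simultaneously (inside the product $\prod_{j=1}^{k}(1-k\mu_{j}\mathfrak M_{\lambda}(\theta)t)^{-1}$), a bound valid only on a central block is insufficient; you would need a mechanism that suppresses the numerator uniformly in $j$, and the H\"older/localisation idea cannot deliver this.

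For comparison, the paper's proof consists entirely of the one-line sandwich
\[
\frac{-1}{|k^{1+\omega}+5\rho|}\;\le\;\mathfrak M_{\lambda}(\theta)\;\le\;\frac{C_{j}^{2}/S_{2}(\theta)}{|k^{1+\omega}+5\rho|}
\]
stated without derivation, so it offers no indication of where the extra factor $k^{-\omega}$ comes from and does not address the uniformity-in-$j$ issue either. Your analysis is therefore strictly more informative than the paper's, and it correctly isolates the honest $O(1/k)$ bound; but the passage to $O(1/k^{1+\omega})$, which you flag as ``the more delicate part'', does not go through as outlined.
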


Lemma~\ref{L3} is required in the proof of Lemma~\ref{L4}.

\begin{lemma}\label{L4}
	Let $T_{1}, T_{2}, T_{3},...$ be independent random variables from an exponential distribution with mean $\mu_{i}<\infty$, for all $i$. Then for any $\epsilon>0$ 
	\begin{equation}\label{E35}
		\lim\limits_{k \rightarrow \infty}\prob\left(\sqrt{k}\hat b_{n,k}\ge \epsilon\right)=0,
	\end{equation}
	where $\hat b_{n,k}$ is defined by Eqn.(\ref{E15}).
\end{lemma}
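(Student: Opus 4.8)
The plan is to read the statement through the coefficient notation of Eqn.~(\ref{E34}): writing $\mathfrak{M}_{\lambda,j}(\theta)$ for the $j$-th value of $\mathfrak{M}_{\lambda}(\theta)$, the estimator of Eqn.~(\ref{E15}) is the linear form $\hat{b}_{n,k}(\lambda,\theta)=\sum_{j=1}^{k}\mathfrak{M}_{\lambda,j}(\theta)\,T_{j}$, which, conditionally on the vector $\theta=(\theta_1,\dots,\theta_k)$, is a weighted sum of the independent exponential variables $T_j$ (themselves independent of $\theta$). First I would condition on $\theta$ and apply Chebyshev's inequality,
\begin{equation*}
\prob\!\left(\sqrt{k}\,\hat{b}_{n,k}\ge\epsilon \mid \theta\right)\le \frac{k}{\epsilon^{2}}\,\EX\!\left[\hat{b}_{n,k}^{2}\mid\theta\right]
=\frac{k}{\epsilon^{2}}\Big(\big(\EX[\hat{b}_{n,k}\mid\theta]\big)^{2}+\operatorname{Var}(\hat{b}_{n,k}\mid\theta)\Big),
\end{equation*}
so that the claim reduces to showing that $k$ times each of the two terms on the right vanishes, for $\theta$ almost surely.

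For the variance term I would use $\operatorname{Var}(T_j)=\mu_j^{2}\le M^{2}$ (the means being bounded, say by $M$, as is the case for the regression means of Eqn.~(\ref{E10})), so that $\operatorname{Var}(\hat{b}_{n,k}\mid\theta)\le M^{2}\sum_{j=1}^{k}\mathfrak{M}_{\lambda,j}(\theta)^{2}$; Lemma~\ref{L3} gives $\mathfrak{M}_{\lambda,j}(\theta)=O(k^{-(1+\omega)})$, whence $k\sum_{j=1}^{k}\mathfrak{M}_{\lambda,j}(\theta)^{2}=O(k^{-2\omega})\to 0$. For the mean term the decisive observation is the identity $\sum_{j=1}^{k}\mathfrak{M}_{\lambda,j}(\theta)=0$, valid because $\sum_{j}\tilde{W}_{j}\big(C_{j}-\sum_{i}\tilde{W}_{i}C_{i}\big)=S_{1}(\theta)-S_{1}(\theta)=0$ while the denominator in Eqn.~(\ref{E34}) is free of $j$. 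Substituting $\mu_j=\gamma+b_{n,k}C_j$ from Eqn.~(\ref{E10}) into $\EX[\hat{b}_{n,k}\mid\theta]=\sum_j\mathfrak{M}_{\lambda,j}(\theta)\mu_j$, the $\gamma$-part drops out by the zero-sum identity and one is left with $\EX[\hat{b}_{n,k}\mid\theta]=b_{n,k}\sum_j\mathfrak{M}_{\lambda,j}(\theta)C_j$, where $b_{n,k}=b(n/k)$ is the true second-order rate and, by the same manipulation, $\sum_j\mathfrak{M}_{\lambda,j}(\theta)C_j=S_2(\theta)/\big(2\kappa(\alpha(k))\lambda+S_2(\theta)\big)=O(1)$. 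Hence $\sqrt{k}\,\EX[\hat{b}_{n,k}\mid\theta]=O\!\big(\sqrt{k}\,b(n/k)\big)\to0$ under the rate condition $\sqrt{k}\,b(n/k)\to0$ in force whenever Lemma~\ref{L4} is invoked, and so $k\big(\EX[\hat{b}_{n,k}\mid\theta]\big)^{2}\to0$.

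Combining the two estimates gives $\prob(\sqrt{k}\,\hat{b}_{n,k}\ge\epsilon\mid\theta)\to0$ for $\theta$ a.s.; to conclude I would remove the conditioning exactly as in the bias and MSE computations preceding Theorem~\ref{TH1}, writing $\prob(\sqrt{k}\,\hat{b}_{n,k}\ge\epsilon)=\EX_{\theta}\big[\prob(\sqrt{k}\,\hat{b}_{n,k}\ge\epsilon\mid\theta)\big]$ and applying dominated convergence over $[0,1]^{k}$ (the integrand is bounded by $1$), which sends the probability to $0$.

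I expect the mean term to be the main obstacle. The naive bound $\big|\sum_j\mathfrak{M}_{\lambda,j}(\theta)\mu_j\big|\le M\sum_j|\mathfrak{M}_{\lambda,j}(\theta)|=O(k^{-\omega})$ is too weak once multiplied by $\sqrt{k}$, so the argument must genuinely exploit the zero-sum cancellation together with the decay of the true rate $b(n/k)$ to recover the extra power of $k$; keeping the orders in $k$ consistent throughout, and verifying that the passage from the conditional to the unconditional probability is uniform enough to justify dominated convergence, is where the care is required.
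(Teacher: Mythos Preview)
Your argument is sound but follows a genuinely different route from the paper. The paper does \emph{not} use Chebyshev; instead it treats $\hat b_{n,k}=\sum_{j}\mathfrak{M}_{\lambda,j}(\theta)T_j$ via its moment generating function, bounds each factor using only the size estimate $\mathfrak{M}_{\lambda,j}(\theta)=O(k^{-(1+\omega)})$ from Lemma~\ref{L3}, shows $\frac{1}{k}\log M_{\hat b_{n,k}}(kt)\to 0$, and then invokes the G\"artner--Ellis theorem to obtain a large deviation principle with degenerate rate function $I(x)=0$ for $x=0$ and $+\infty$ otherwise, from which $\mathbb{P}(\hat b_{n,k}>\epsilon/\sqrt{k})\le e^{-kI(x)}\to 0$ is read off. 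In particular the paper never uses the zero-sum identity $\sum_j\mathfrak{M}_{\lambda,j}(\theta)=0$ nor any information on the regression means $\mu_j$ beyond boundedness.

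Your route is more elementary---second-moment instead of exponential bounds---and the zero-sum cancellation you exploit for the mean is a nice structural observation. The price, which you flag yourself, is that you need the extra hypothesis $\sqrt{k}\,b(n/k)\to 0$ on the \emph{true} second-order rate to kill the conditional mean; this hypothesis is standard in the literature, but it is not part of the statement of Lemma~\ref{L4} as written (and in the paper the only condition of this type in Theorem~\ref{TH1} is on the \emph{estimator}, which is precisely what Lemma~\ref{L4} is meant to supply, so you cannot borrow it from there without circularity). Conversely, the paper's LDP argument formally avoids that hypothesis, though its application of the large-deviation upper bound to the moving set $(\epsilon/\sqrt{k},\infty)$ is somewhat informal. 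In short: your proof is cleaner and self-contained once the rate condition on $b(n/k)$ is granted; the paper's proof aims for a hypothesis-free statement via LDP machinery.
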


\begin{rem}
Lemma~\ref{L4} shows the statistics $\sqrt{k}\hat b_{n,k} \to_p 0$ as $k\to \infty$.
\end{rem} 

The next Lemma is about the satisfaction of the Lyapunov's version of the central limit theorem. The Lyapunov's variant of the central limit theorem assumes the existence of a finite moment of an order higher than two.
\begin{lemma}\label{L5}
	Suppose that $Z_{1}, Z_{2},...$ are independent random variables such that $\EX(Z_{k})=\hat{\mu}_{k}$ and $Var(Z_{k})=\sigma_{k}^2<\infty$, then there exists $\delta>0$ such that
	\begin{equation}\label{C3E44}
		\lim\limits_{k \rightarrow \infty}\dfrac{1}{s_{n}^{2+\delta}}\sum_{j=1}^{k}\EX\left(|Y_{j}-\hat{\mu}_{j}|^{2+\delta}\right)=0,
	\end{equation}
	where $Z_{k}=\EX_{\theta}[\tilde{W}_{k}(\theta_{k})]T_{k}.$
\end{lemma}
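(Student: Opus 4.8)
The plan is to verify the Lyapunov condition with exponent $r=3$, which is precisely the value matching the normalisation $s_n^{3}$ in \eqref{C3E44}; here $s_n^{2}=\sum_{j=1}^{k}\sigma_j^{2}$ is the variance of the partial sum $\sum_{j=1}^{k}Y_j$. Throughout I would argue conditionally on the auxiliary sequence $\theta=(\theta_1,\dots,\theta_k)$, so that each $\tilde W_j=\tilde W_j(\theta_j)$ is a fixed constant, $\hat\mu_j=\tilde W_j\mu_j$ and $Y_j-\hat\mu_j=\tilde W_j(T_j-\mu_j)$. Since every bound produced below will be deterministic and uniform in $\theta\in[0,1]^k$, the unconditional statement follows by taking $\EX_\theta$ and invoking Fubini together with dominated convergence.

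First I would record two elementary facts. Fact (a): as $\mu_j=\gamma+b_{n,k}C_j$ with $b_{n,k}\to 0$ and $C_j\in(0,1]$, the means $\mu_j$ are bounded above by a constant $M$ for all large $k$, so the absolute third central moment of the exponential law gives $\EX|T_j-\mu_j|^{3}=c_0\,\mu_j^{3}\le c_0M^{3}=:A<\infty$, uniformly in $j$ and $k$, because that moment scales like the cube of the mean. Fact (b): from the explicit form \eqref{E20}, $0<\tilde W_j\le \tfrac{2\kappa(\alpha(k))}{k}$, and by \eqref{E19} one has $\kappa(\alpha(k))=\tfrac{\alpha(k)+1}{2\alpha(k)+1}\le 1$, hence $0<\tilde W_j\le 2/k$ uniformly in $1\le j\le k$.

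Combining (a) and (b), conditionally on $\theta$,
\[
\sum_{j=1}^{k}\EX\!\left(|Y_j-\hat\mu_j|^{3}\right)=\sum_{j=1}^{k}\tilde W_j^{\,3}\,\EX|T_j-\mu_j|^{3}\le A\sum_{j=1}^{k}\tilde W_j^{\,3}\le A\cdot k\cdot\frac{8}{k^{3}}=\frac{8A}{k^{2}}.
\]
For the denominator it suffices to produce a lower bound $s_n^{2}\ge c_1/k$ for some $c_1>0$ and all large $k$: since $\mathrm{Var}(T_j)=\mu_j^{2}$ is bounded below by a positive constant (as $\mu_j\to\gamma>0$) and $\tilde W_j\ge \tfrac{2\kappa(\alpha(k))}{k}\bigl(1-\tfrac{j}{k+1}\bigr)$, a Riemann-sum estimate of $\sum_{j=1}^{k}\tilde W_j^{\,2}\mu_j^{2}$ gives the claim (this is the quantitative counterpart of $\sqrt k(\hat\gamma_{RW}(\lambda)-\gamma)\to_d N(0,\gamma^{2})$, i.e. $s_n^{2}\asymp 1/k$). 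Then $s_n^{-3}\le c_1^{-3/2}k^{3/2}$, so
\[
\frac{1}{s_n^{3}}\sum_{j=1}^{k}\EX\!\left(|Y_j-\hat\mu_j|^{3}\right)\le \frac{8A}{c_1^{3/2}}\,\frac{k^{3/2}}{k^{2}}=\frac{8A}{c_1^{3/2}}\,k^{-1/2}\longrightarrow 0,
\]
which is \eqref{C3E44} with $r=3$; integrating over $\theta$ preserves the bound and gives the unconditional version.

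The main obstacle I anticipate is the lower bound $s_n^{2}\ge c_1/k$: one must ensure the conditional variance of the partial sum does not degenerate faster than $1/k$, i.e. control $\sum_{j=1}^{k}\tilde W_j^{\,2}\mu_j^{2}$ from below uniformly in $\theta$. The randomness of $\tilde W_j$ through the factor $\theta_j^{\alpha(k)}$ (with $\alpha(k)$ possibly of order $k$) in fact makes that factor typically negligible, which helps, but this has to be made rigorous rather than left heuristic. A secondary, purely bookkeeping point is the interchange of $\EX_\theta$ with the limit, which is immediate here since the conditional Lyapunov ratio is dominated by the deterministic null sequence $8A\,c_1^{-3/2}k^{-1/2}$.
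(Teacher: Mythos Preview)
Your proposal is correct and follows the same skeleton as the paper: take $r=3$, show the numerator $\sum_j\EX|Y_j-\hat\mu_j|^3$ is $O(k^{-2})$ via the factorisation $|Y_j-\hat\mu_j|^3=\tilde W_j^3|T_j-\mu_j|^3$ and the bound $\tilde W_j=O(1/k)$, show $s_n^2\asymp 1/k$, and conclude the ratio is $O(k^{-1/2})$. The paper arrives at the same orders but by expanding $\EX_\theta[\tilde W_k^3]$ explicitly (its Eqn.~(40)) and writing out $s_n^2$ term by term (its Eqn.~(43)), then asserting the conclusion; your uniform bounds $\tilde W_j\le 2/k$ and $\EX|T_j-\mu_j|^3\le A$ are cleaner and make the $k^{-1/2}$ rate visible, and you are right to single out the lower bound $s_n^2\ge c_1/k$ as the step that needs care---the paper simply states the variance formula and says ``it can be observed'' that the Lyapunov ratio vanishes, without isolating that lower bound. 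Your deterministic pointwise lower bound $\tilde W_j\ge \tfrac{2\kappa(\alpha(k))}{k}\bigl(1-\tfrac{j}{k+1}\bigr)$ together with $\kappa(\alpha(k))\ge 1/2$ does give $\sum_j\tilde W_j^2\ge c/k$ uniformly in $\theta$, so the obstacle you flag is not serious.
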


\begin{rem}
	Setting the penalty term to 0 reduces the  regularised weighted least squares estimator to a weighted least squares estimator. The difference between this weighted least squares estimator and the one introduced by \cite{ocran2021reduced} is that, this weighted least squares estimator has smaller asymptotic variance and this is due to the introduction of randomness into the weight function. The resulting weighted least squares estimator is also asymptotically unbiased, consistent, and normally distributed with mean $0$ and variance $\gamma^2$. 
\end{rem}

 For  proofs of   the above  Lemmas  we  refer  interested person  to    \cite{ocran2021reduced}.

\section{Simulation Study}
In the previous section, we proposed the regularised weighted least squares estimators under the semi-parametric setting to estimate the tail index of the underlying distribution of a given data from the Pareto-type of distributions. In this section, we perform a simulation study to compare the performance of our proposed estimators to other existing semi-parametric tail index estimators. Particularly, the regularised weighted least squares, RWLS, the reduced-bias weighted least squares with modified weight function, WLS, the ridge regression, RR \cite{Buitendag2018}, the least squares, LS \cite{Beirlant2002}, the Hill estimator, HILL  \cite{Hill1975} and the bias-corrected Hill, BCHILL \cite{Caeiro2005a} in the case of Pareto-type distributions are compared.

\subsection{Simulation design}
We consider the Fr\'{e}chet and Burr XII from the Pareto-type distributions as shown in Table~\ref{Table:T1}. For each distribution $F$, we generate 1000 repetitions of samples of size $n=50, 500$ and $2000$. For the Fr\'{e}chet distributions, we consider $\alpha=10, 2$ and $1.0$; and for the Burr XII we consider the mixtures 
\begin{enumerate}
	\item[i.] $\xi=\sqrt{10} \hspace{6pt} \&  \hspace{6pt}\tau=\sqrt{10}$,
	\item[ii.] $\xi=\sqrt{2} \hspace{6pt} \& \hspace{6pt} \tau=\sqrt{2}$ and
	\item[iii.] $\xi=2 \hspace{6pt} \& \hspace{6pt} \tau=1/2$
\end{enumerate}

 to obtain the tail index values $\gamma=0.1, 0.5$ and $1.0$, respectively. We consider the finite sample behaviour of the proposed estimators, RWLS and WLS, and also compared these estimators with RR, LS, BCHILL and HILL. The MSE and bias are plotted as a function of the number of top-order statistics, $k$, to investigate the estimators' sample path behaviour.

In the case of the weight function, the $\theta_{j}$'s will be replaced with their point estimate, in this case, the mean of a standard uniform distribution. In the case of $\alpha(k)$, we select $\alpha(k)$ such that $\varDelta \to 0$, as $k\to \infty$. This choice of $\alpha(k)$ is made because in practice we have observed that it yields much more stable estimates compared to when $\alpha(k)$ is selected such that $\varDelta \to c>0$ in application.

\begin{table}[!htbp]
	\caption{Heavy-tailed distributions from the Pareto-type distribution}
	\centering
	\begin{tabular}{*{7}{c}}
		\hline
		Distribution & $1-F(x)$ & $\ell_{F}(x)$ & $\gamma$\\ \hline 
		Burr type XII & $\left(1+x^{\tau}\right)^{-\xi}$ & $\left(1+x^{-\tau}\right)^{-\xi}$ & $1/\tau\xi$ \\
		Fr\'{e}chet & $1-F(x)=1-exp(-x^{-\alpha})$ & $1-\frac{x^{-\alpha}}{2}+O(x^{-\alpha})$ & $1/\alpha$ \\ \hline
	\end{tabular}
	\label{Table:T1}
\end{table}

\subsection{Discussion of simulation results}
In this section, we discuss the behaviour of RWLS and WLS relative to RR, LS, HILL and BCHILL. The Mean Square Error (MSE) and bias are the performance measures in the simulation studies. The simulation results for the Burr distribution with different tail indexes are shown in Figures \ref{Fig:Figure 1} - \ref{Fig:Figure 3}. Also, Figures \ref{Fig:Figure 4} - \ref{Fig:Figure 6}  present the simulation results for the Fr\'{e}chet distribution with varying tail indexes.

From these figures, the plots of WLS and RWLS follow the same sample path for $k\le0.4n$, i.e., their performance are relatively the same on that interval. WLS and LS are very close to each other, though generally, WLS slightly outperforms LS in terms of MSE and bias. Thus, generally the WLS can be considered the most appropriate estimator of the tail index among the regression-based estimators (i.e., RR, LS, WLS and RWLS) since it mostly has smallest bias and MSE across all samples.

Additionally, the MSE plots of the proposed estimators are low and near constant over the central part of $k$, except in the case of Burr XII with $\gamma=1.0$. With the exception of the HILL estimator (which globally has the highest MSE),  the MSE curves of the estimators are mostly close to each other in the central $k$ region, especially in the case of the Fr\'{e}chet distribution. This implies that the proposed estimators are competitive with the existing estimators. However, the proposed estimators, WLS and RWLS, generally attain the lowest bias for small samples, i.e., $n=50$. Furthermore, for medium to large values of $k$, the sample paths of RWLS in the MSE and bias plots are between HILL and RR. Even though the BCHILL estimator mostly has the smallest MSE and bias, the proposed estimators (RWLS and WLS) outperform it for large values of $k$.

Hence, from the simulation results, WLS and RWLS are appropriate for the estimation of the tail index of the Pareto-type distributions in terms of MSE and bias.

\begin{figure}[!htbp]
	\centering
	\subfloat{\includegraphics[width=4cm,height=2.5cm]{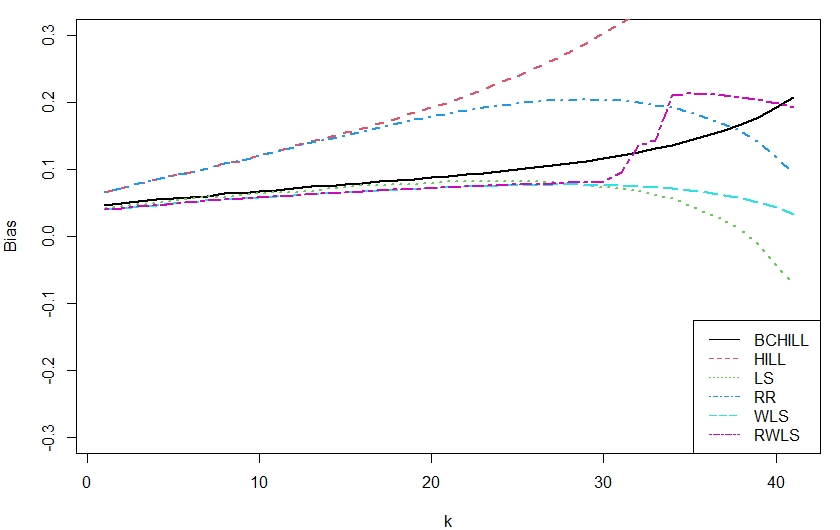}}\hfill
	\subfloat{\includegraphics[width=4cm,height=2.5cm]{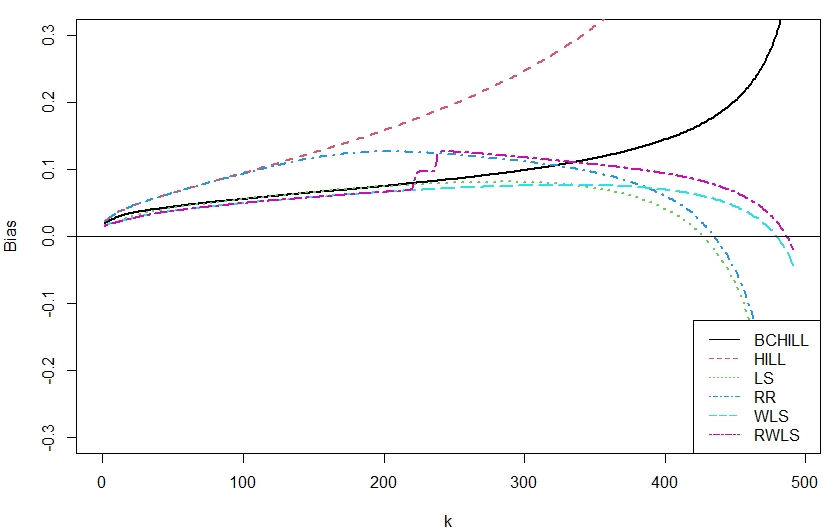}}\hfill
	\subfloat{\includegraphics[width=4cm,height=2.5cm]{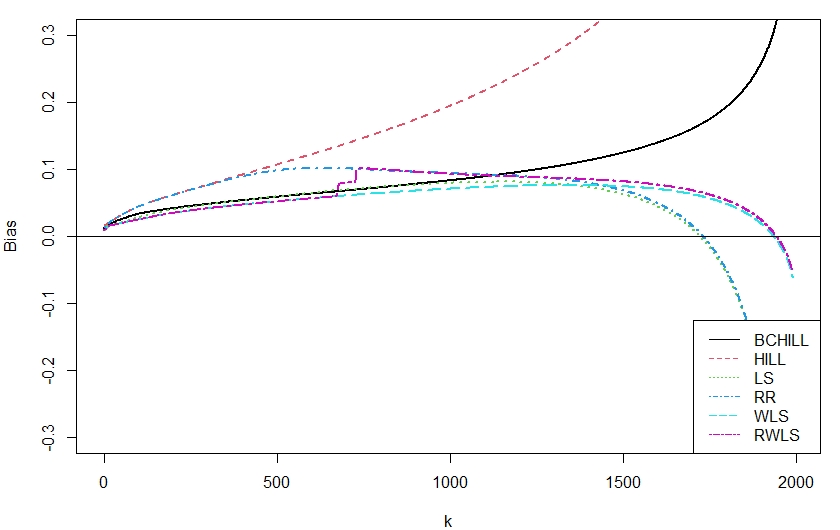}}\hfill
	\\[\smallskipamount]
	\subfloat{\includegraphics[width=4cm,height=2.5cm]{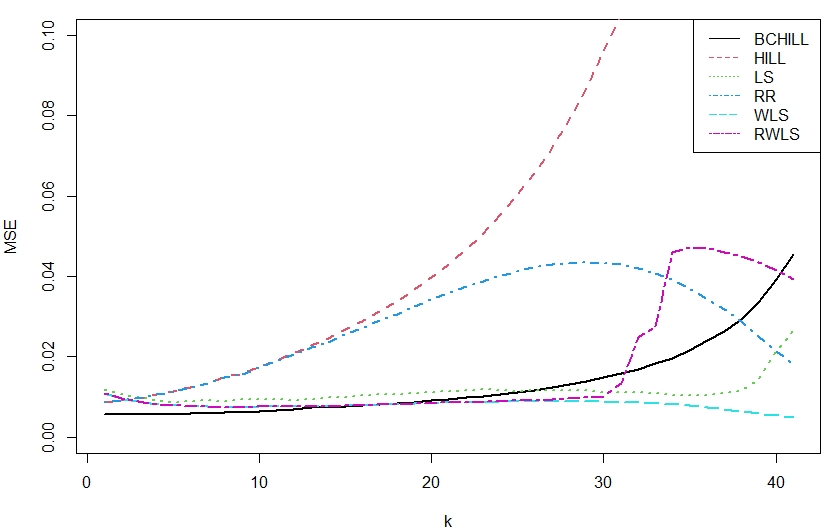}}\hfill
	\subfloat{\includegraphics[width=4cm,height=2.5cm]{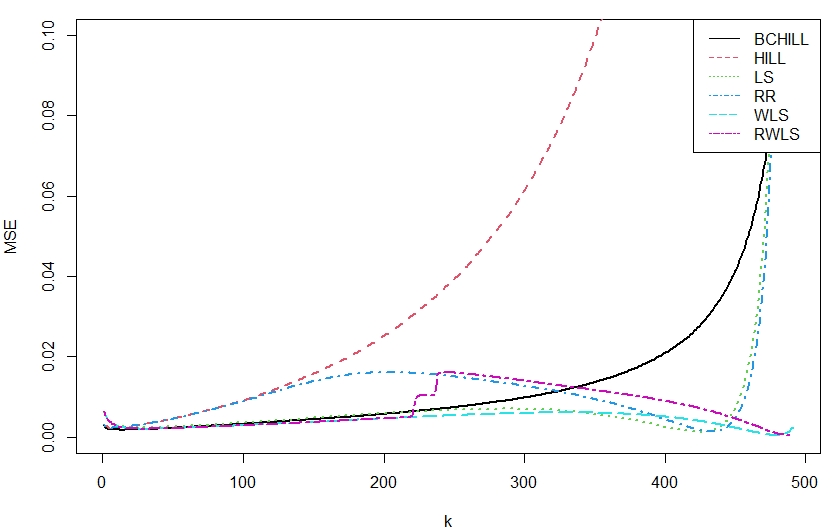}}\hfill
	\subfloat{\includegraphics[width=4cm,height=2.5cm]{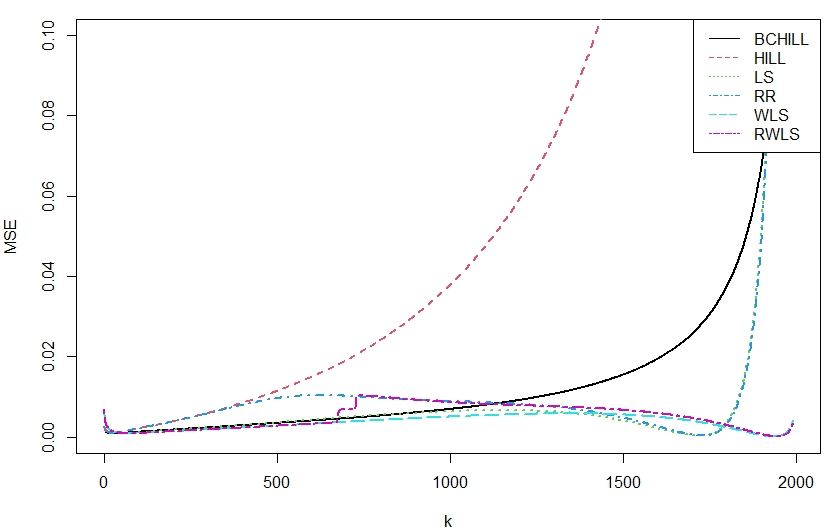}}
	\caption{Results for Burr type XII distribution with $\gamma=0.1$: Bias(top row) and MSE(bottom row). First column: $n=50$; second column: $n=500$; and third column: $n=2000$.}
	\label{Fig:Figure 1}
\end{figure}

\begin{figure}[!htbp]
	\centering
   \subfloat{\includegraphics[width=4cm,height=2.5cm]{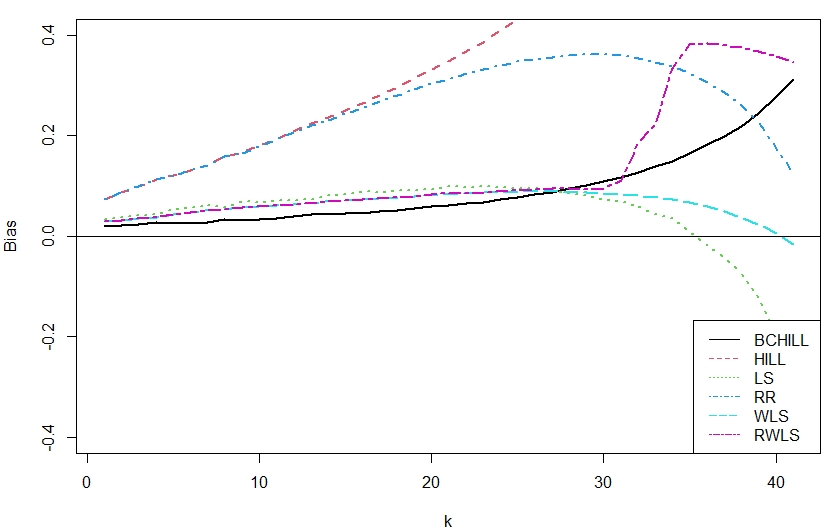}}\hfill
   \subfloat{\includegraphics[width=4cm,height=2.5cm]{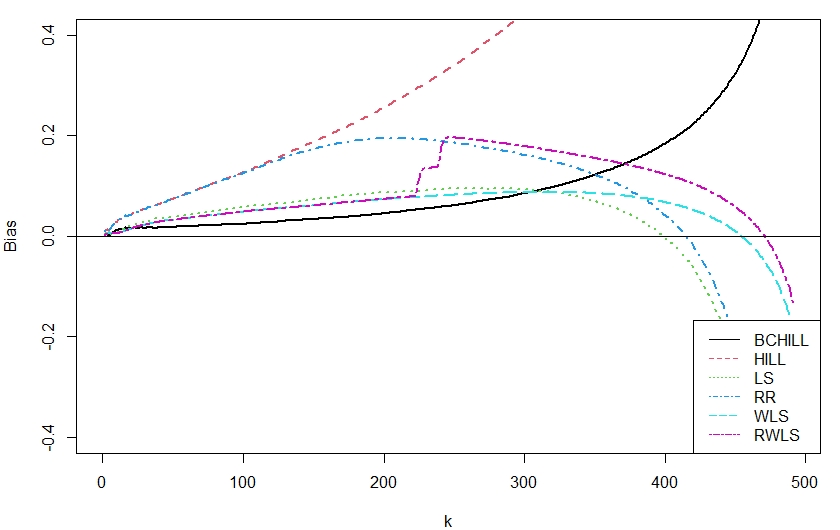}}\hfill
   \subfloat{\includegraphics[width=4cm,height=2.5cm]{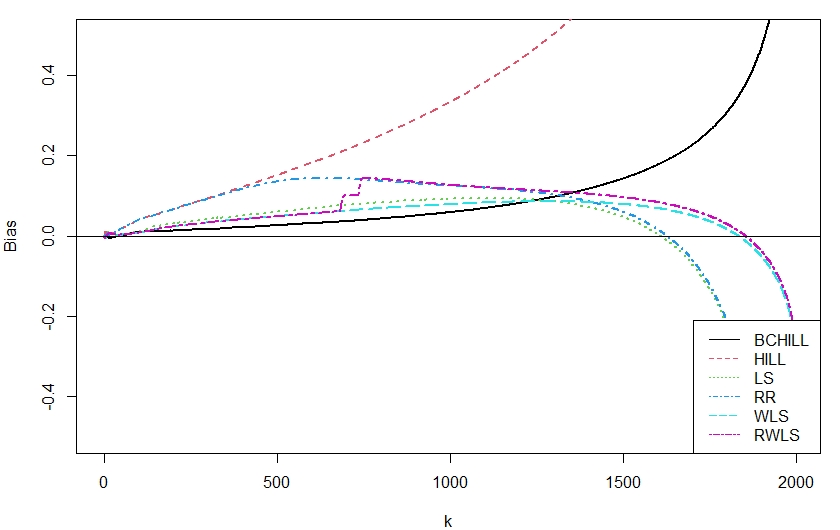}}\hfill
	\\[\smallskipamount]
	\subfloat{\includegraphics[width=4cm,height=2.5cm]{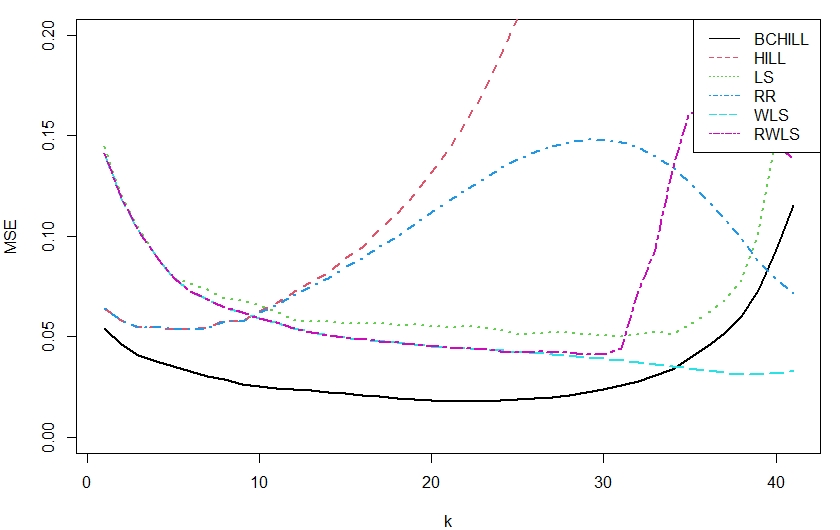}}\hfill
	\subfloat{\includegraphics[width=4cm,height=2.5cm]{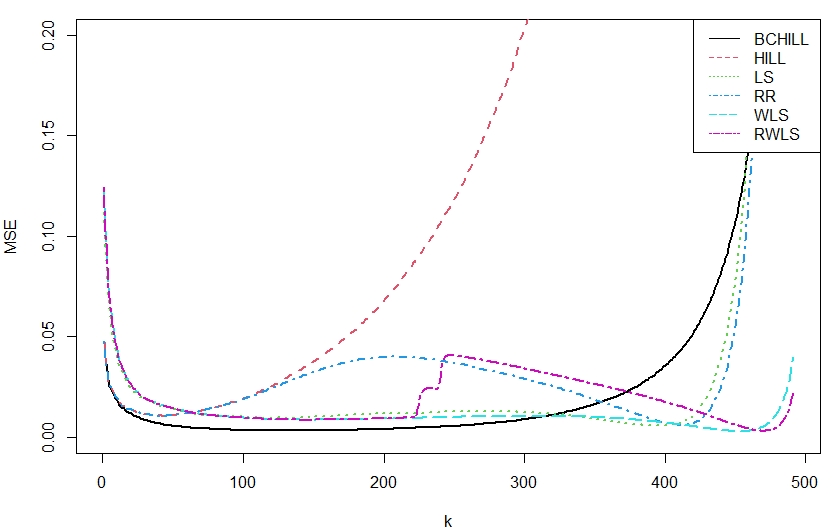}}\hfill
	\subfloat{\includegraphics[width=4cm,height=2.5cm]{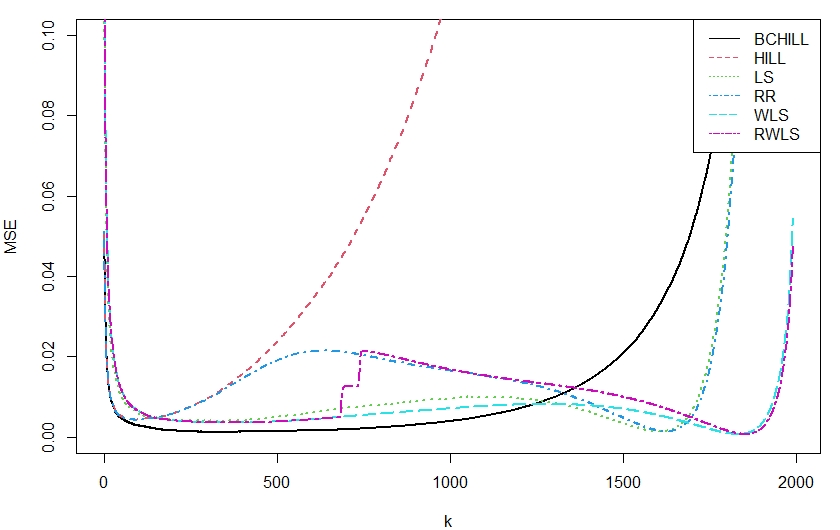}}
	\caption{Results for Burr type XII distribution with $\gamma=0.5$: Bias(top row) and MSE(bottom row). First column: $n=50$; second column: $n=500$; and third column: $n=2000$.}
	\label{Fig:Figure 2}
\end{figure}
\begin{figure}[!htbp]
	\centering
\subfloat{\includegraphics[width=4cm,height=2.5cm]{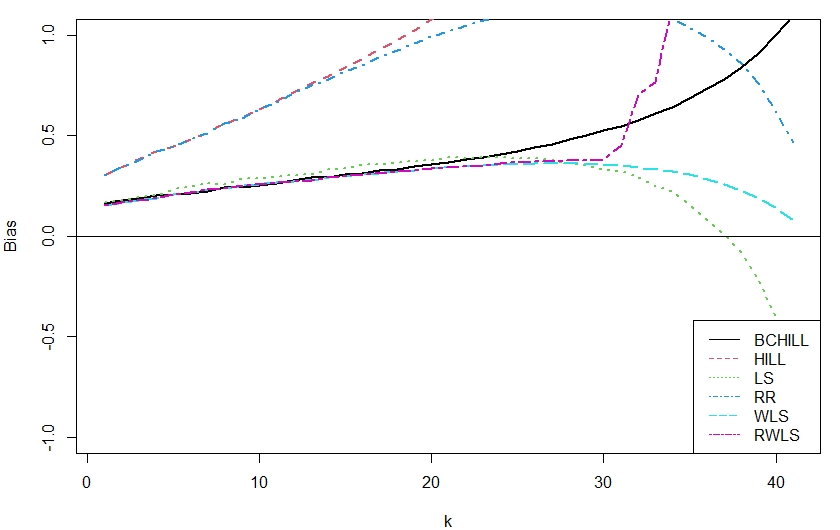}}\hfill
\subfloat{\includegraphics[width=4cm,height=2.5cm]{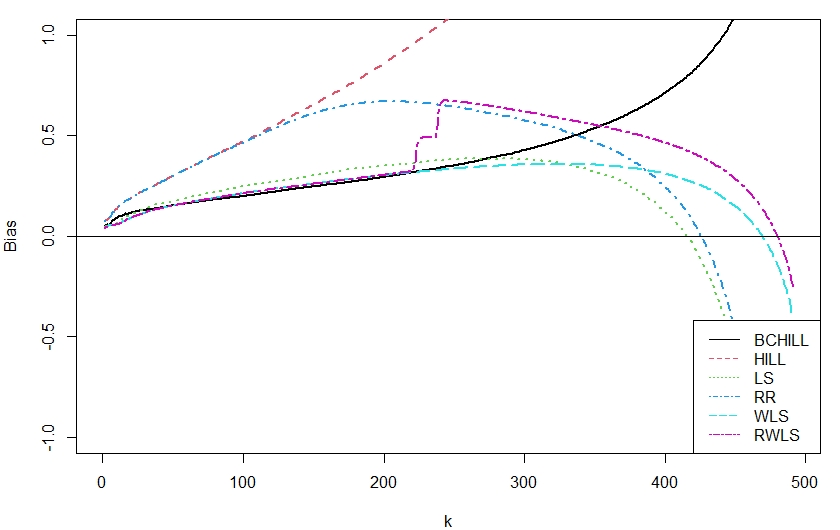}}\hfill
\subfloat{\includegraphics[width=4cm,height=2.5cm]{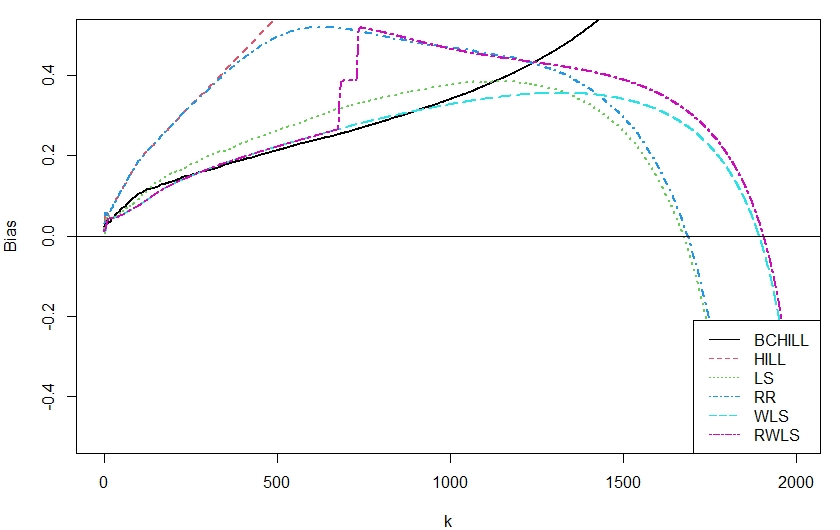}}\hfill
	\\[\smallskipamount]
\subfloat{\includegraphics[width=4cm,height=2.5cm]{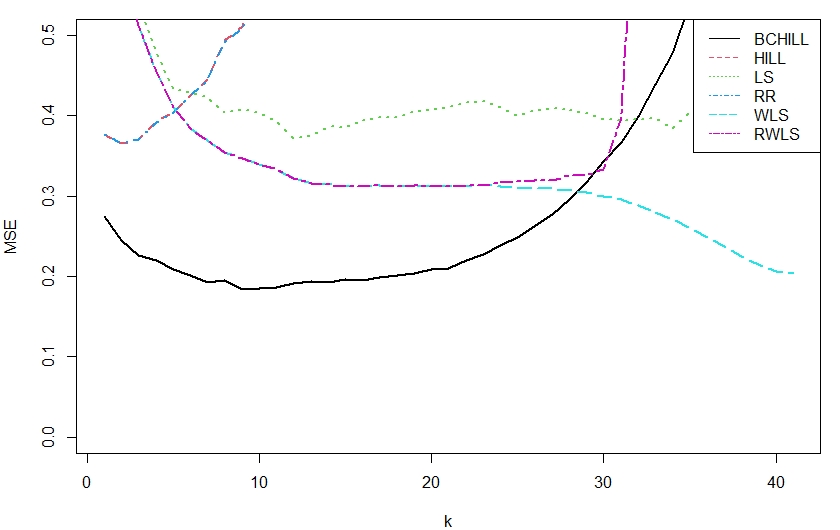}}\hfill
\subfloat{\includegraphics[width=4cm,height=2.5cm]{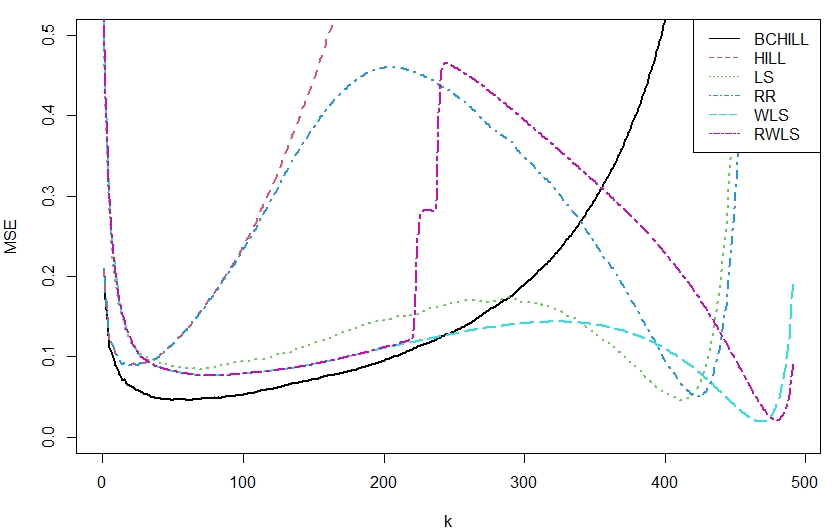}}\hfill
\subfloat{\includegraphics[width=4cm,height=2.5cm]{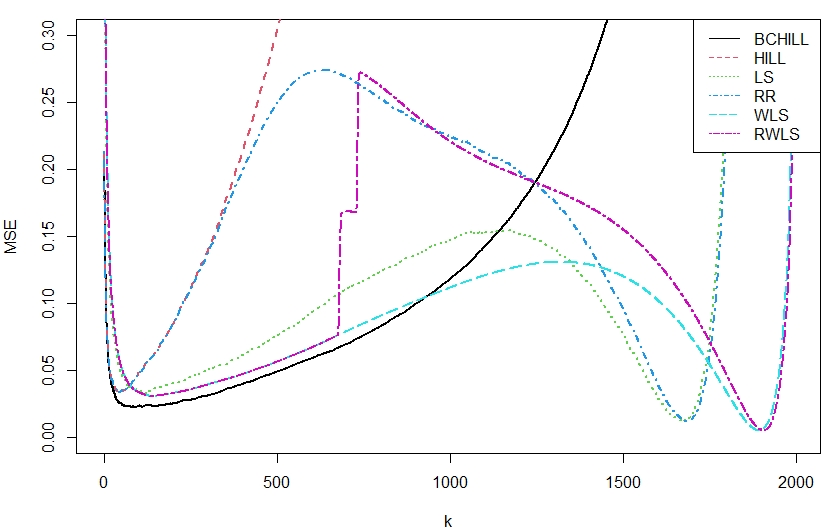}}
	\caption{Results for Burr type XII distribution with $\gamma=1.0$: Bias(top row) and MSE(bottom row). First column: $n=50$; second column: $n=500$; and third column: $n=2000$.}
	\label{Fig:Figure 3}
\end{figure}

\begin{figure}[!htbp]
	\centering
\subfloat{\includegraphics[width=4cm,height=2.5cm]{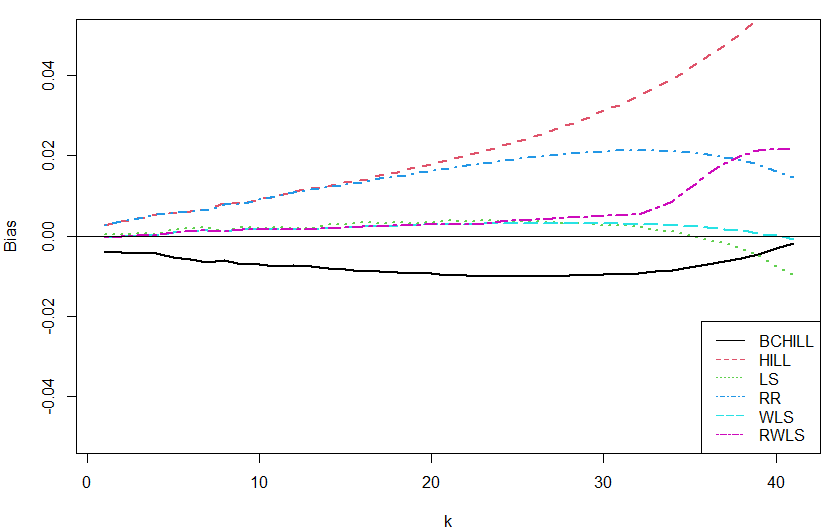}}\hfill
\subfloat{\includegraphics[width=4cm,height=2.5cm]{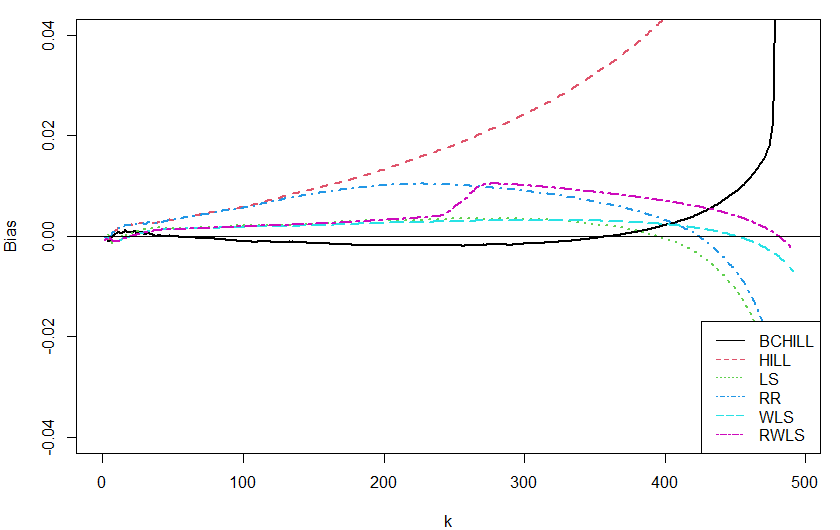}}\hfill
\subfloat{\includegraphics[width=4cm,height=2.5cm]{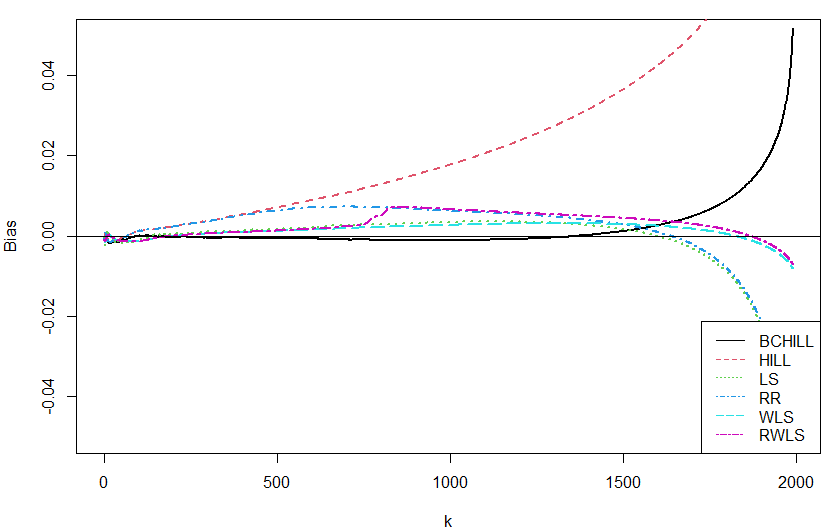}}\hfill
	\\[\smallskipamount]
\subfloat{\includegraphics[width=4cm,height=2.5cm]{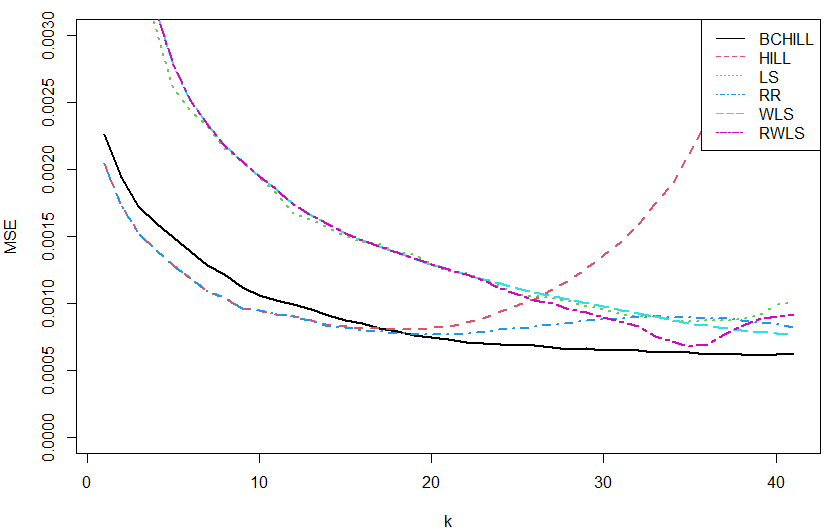}}\hfill
\subfloat{\includegraphics[width=4cm,height=2.5cm]{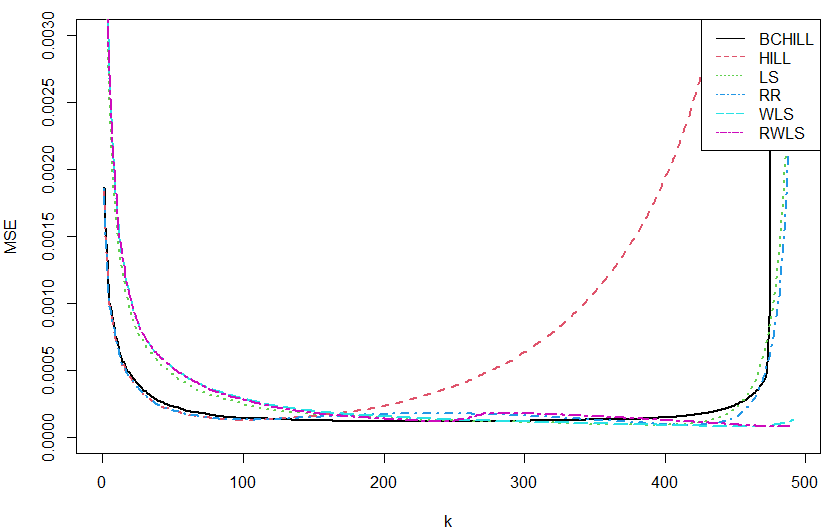}}\hfill
\subfloat{\includegraphics[width=4cm,height=2.5cm]{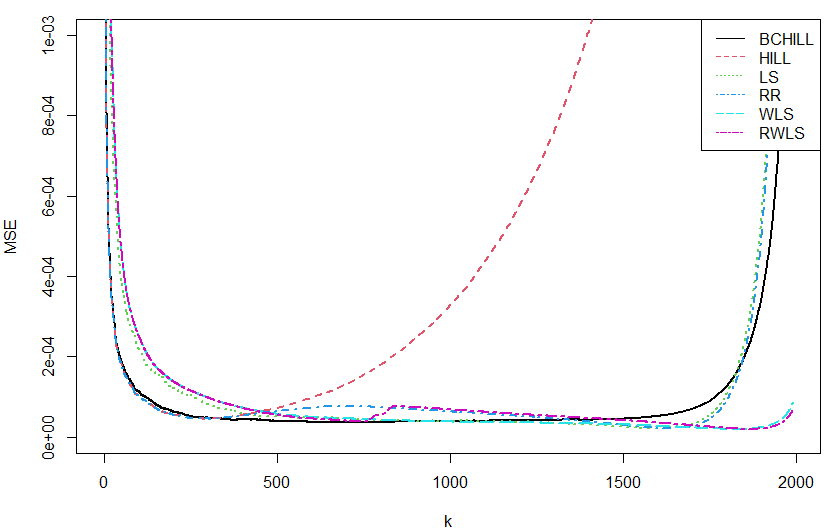}}
	\caption{Results for Fr\'{e}chet distribution with $\gamma=0.1$: Bias(top row) and MSE(bottom row). First column: $n=50$; second column: $n=500$; and third column: $n=2000$.}
	\label{Fig:Figure 4}
\end{figure}
\begin{figure}[!htbp]
	\centering
\subfloat{\includegraphics[width=4cm,height=2.5cm]{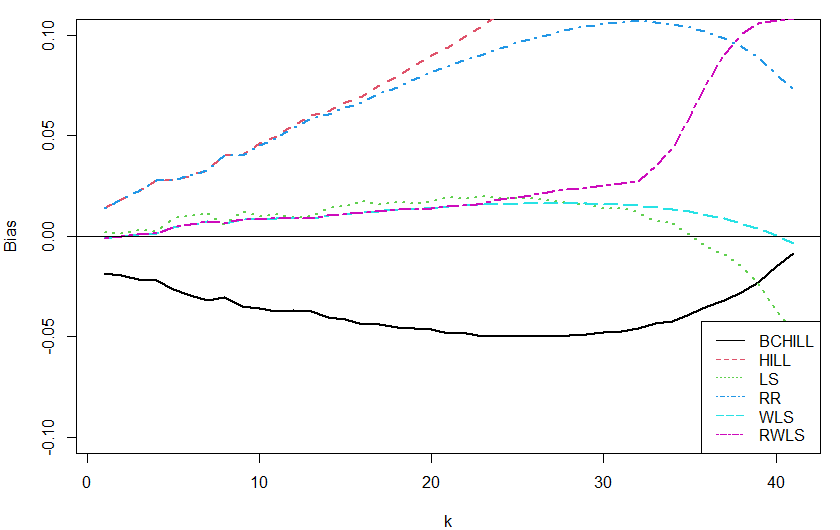}}\hfill
\subfloat{\includegraphics[width=4cm,height=2.5cm]{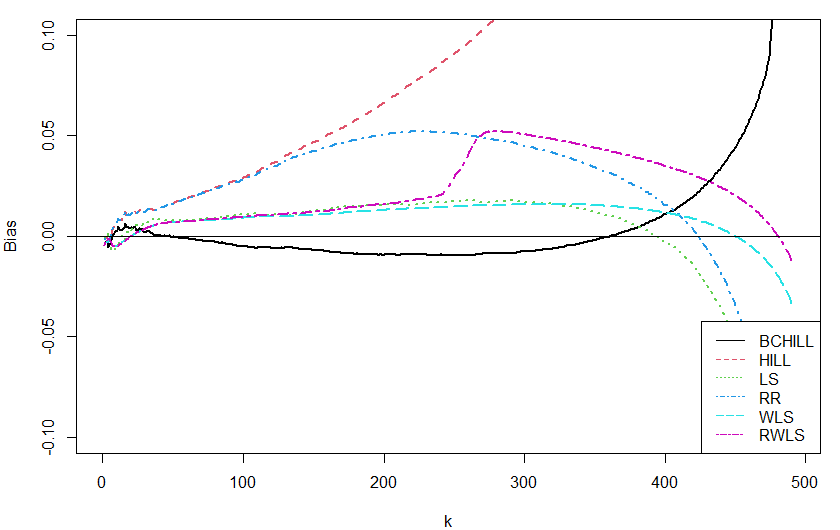}}\hfill
\subfloat{\includegraphics[width=4cm,height=2.5cm]{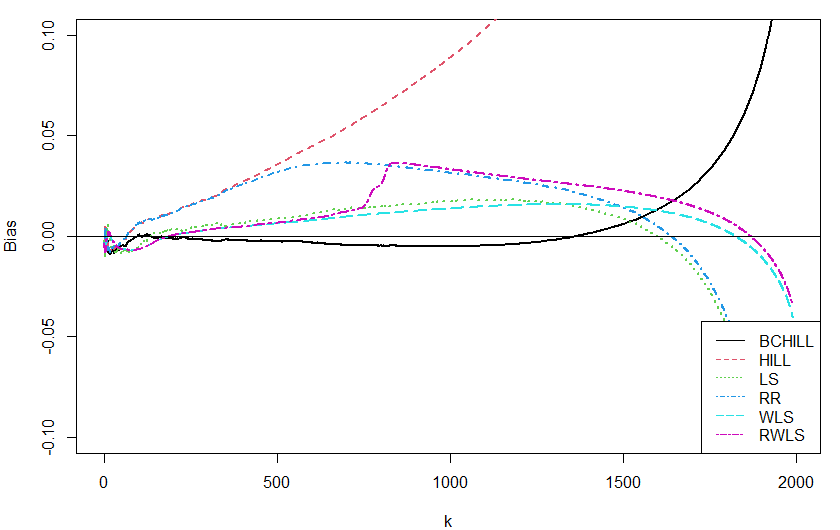}}\hfill
	\\[\smallskipamount]
\subfloat{\includegraphics[width=4cm,height=2.5cm]{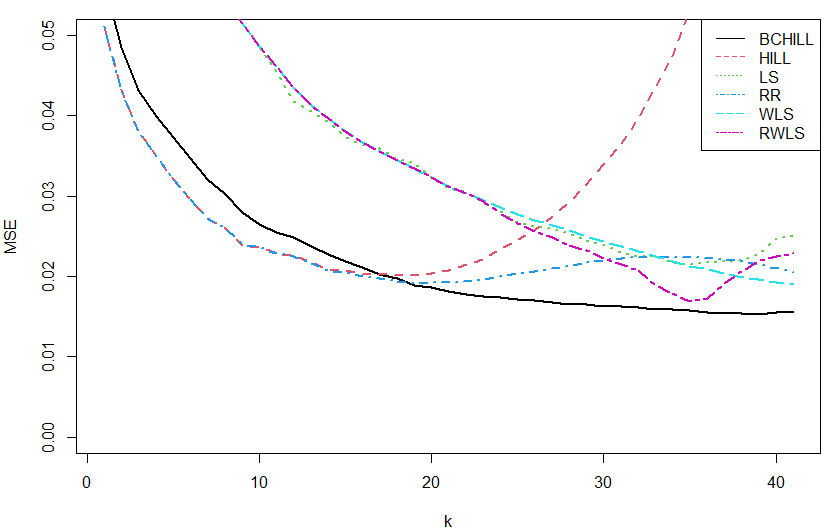}}\hfill
\subfloat{\includegraphics[width=4cm,height=2.5cm]{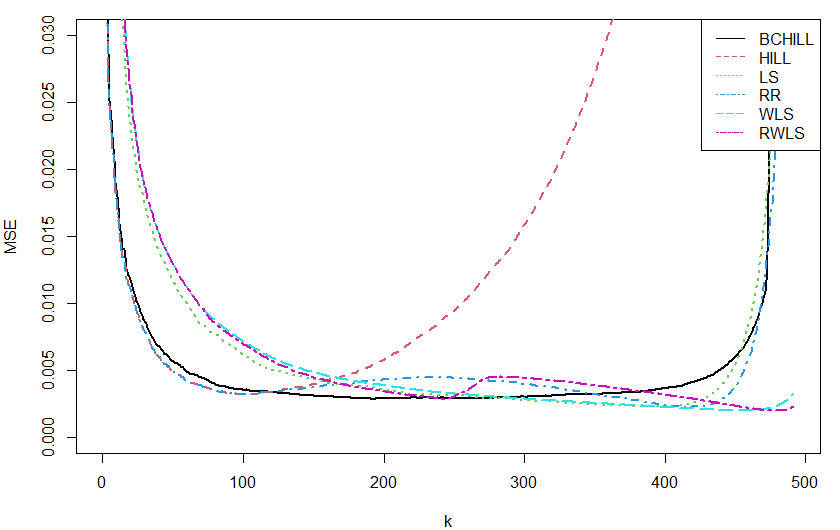}}\hfill
\subfloat{\includegraphics[width=4cm,height=2.5cm]{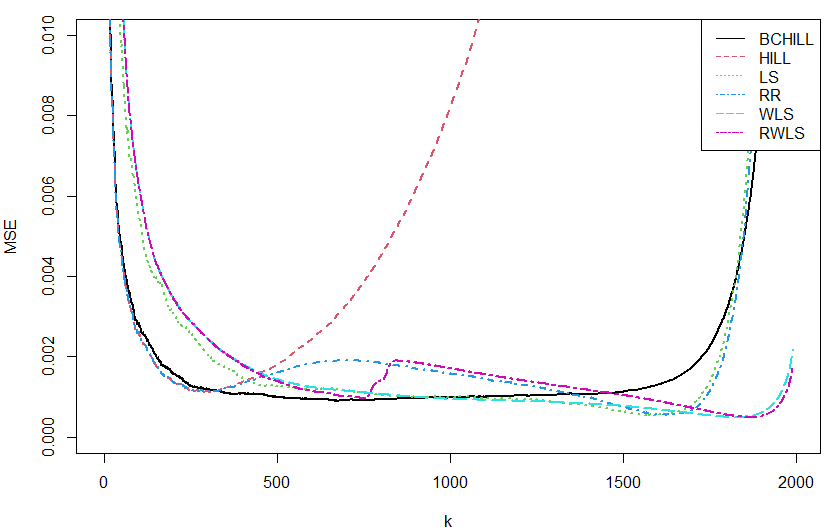}}
	\caption{Results for Fr\'{e}chet distribution with $\gamma=0.5$: Bias(top row) and MSE(bottom row). First column: $n=50$; second column: $n=500$; and third column: $n=2000$.}
	\label{Fig:Figure 5}
\end{figure}

\begin{figure}[!htbp]
	\centering
	\subfloat{\includegraphics[width=4cm,height=2.5cm]{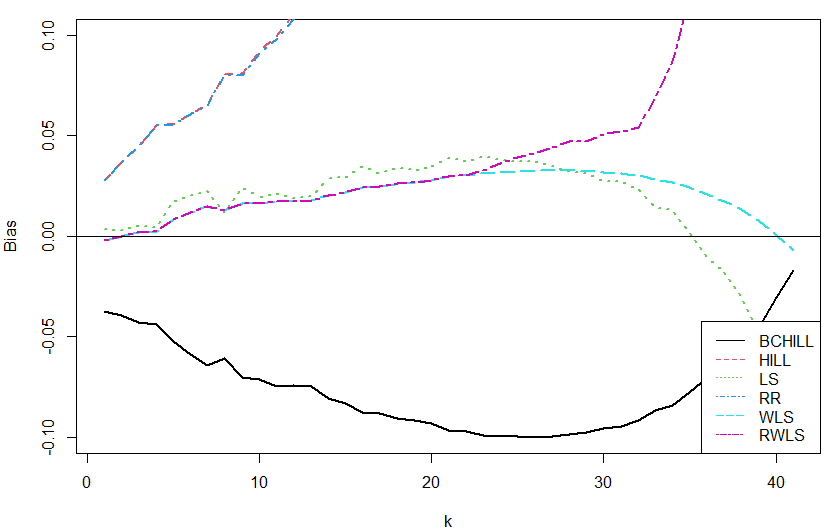}}\hfill
	\subfloat{\includegraphics[width=4cm,height=2.5cm]{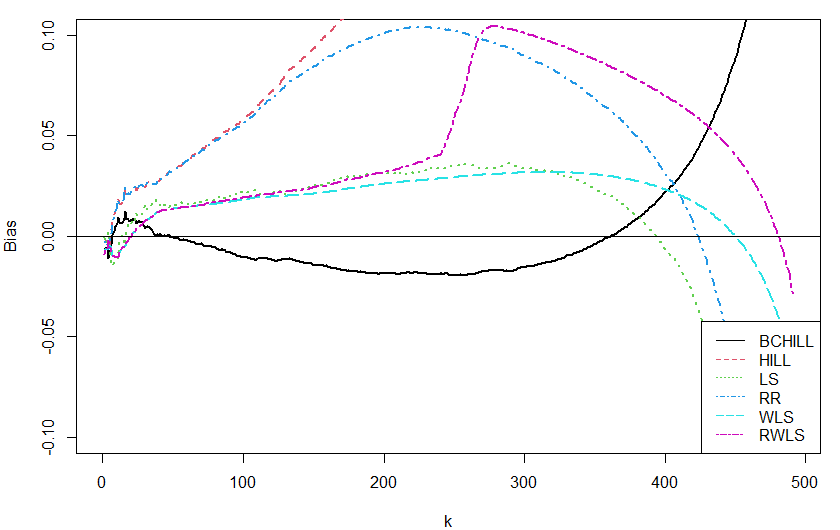}}\hfill
	\subfloat{\includegraphics[width=4cm,height=2.5cm]{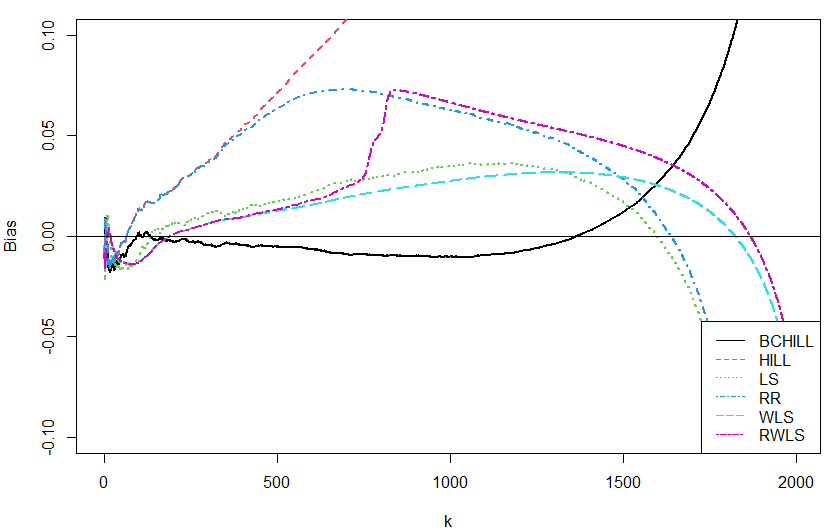}}\hfill
	\\[\smallskipamount]

	\subfloat{\includegraphics[width=4cm,height=2.5cm]{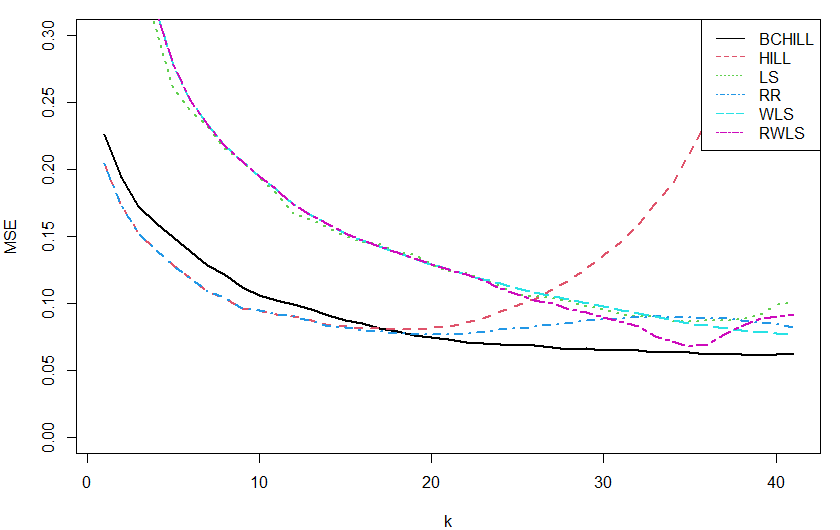}}\hfill
	\subfloat{\includegraphics[width=4cm,height=2.5cm]{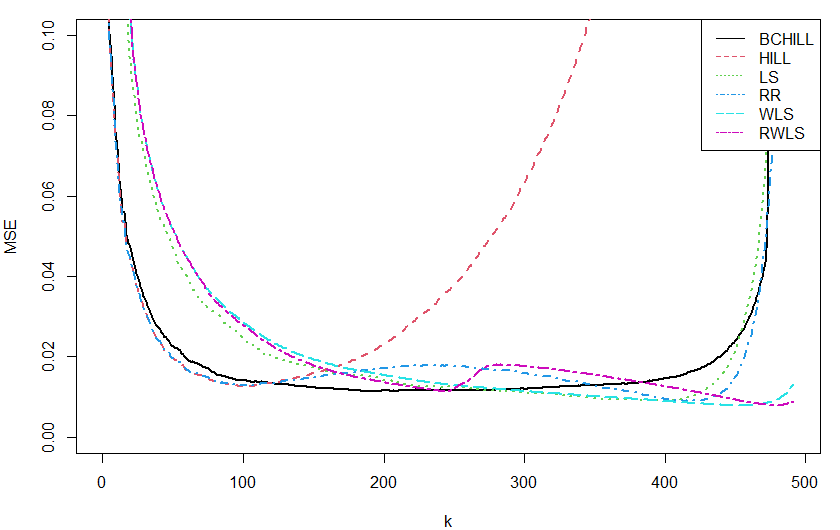}}\hfill
	\subfloat{\includegraphics[width=4cm,height=2.5cm]{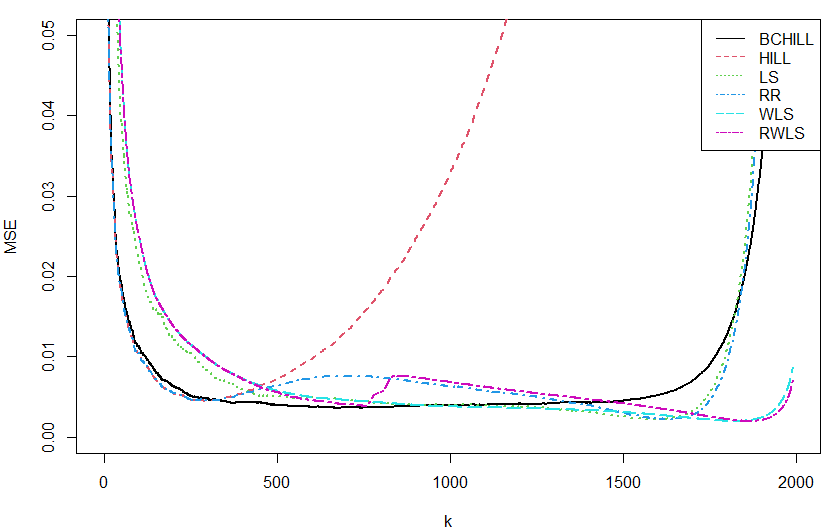}}
	\caption{Results for Fr\'{e}chet distribution with $\gamma=1.0$: Bias(top row) and MSE(bottom row). First column: $n=50$; second column: $n=500$; and third column: $n=2000$.}
	\label{Fig:Figure 6}
\end{figure}

\section{Applications}

In this section, we consider the estimation of the tail index of the underlying distribution of two datasets from the insurance industry. First, 
the SOA Group Medical Insurance dataset consists of over 170,000 claims recorded from 1991 to 1992. In this study, we consider the 1991 dataset, which comprised 75,789 claims and have been studied widely in the extreme value context (see e.g \cite{Beirlant2004},\cite{minkah2021robust}). Considering the large size of this dataset, we focus on the extreme tail of the data and hence consider the top 10\%  data points, (i. e, $0<k\le0.1n$). The SOA dataset is available at \url{https://lstat.kuleuven.be/Wiley/Data/soa.txt}.

Second, the automobile insurance data from Ghana consists of 452 claims from July 7, 2020, to May 11, 2021 and can be found at \url{https://github.com/kikiocran/TailEstimators} We will refer to this dataset as the GH claims in this study. To the best of our knowledge this dataset has never been used in the extreme value theory literature.

The scatter plots of the SOA, and the GH claims are shown in Figure~\ref{Fig:Figure 7}. We observe that two claims and one claim in the SOA and GH claims, respectively, appear to be far detached from the bulk of the data. These observations can also be seen to deviate from linearity and far removed from the bulk of the points respectively in the Pareto and  exponential Q-Q plots (Figure~\ref{Fig:Figure 8}) of the two datasets. Such large observations are suspected outliers and may significantly influence the tail index estimates (see e.g \cite{minkah2021robust}). The convex curvature of the exponential Q-Q plots and the near linearity of the Pareto Q-Q plots of the datasets indicate that the datasets suggest that they belong to the Pareto-type distributions.
  
\begin{figure}[!htbp]
	\centering
	\subfloat{\includegraphics[width=6cm,height=4cm]{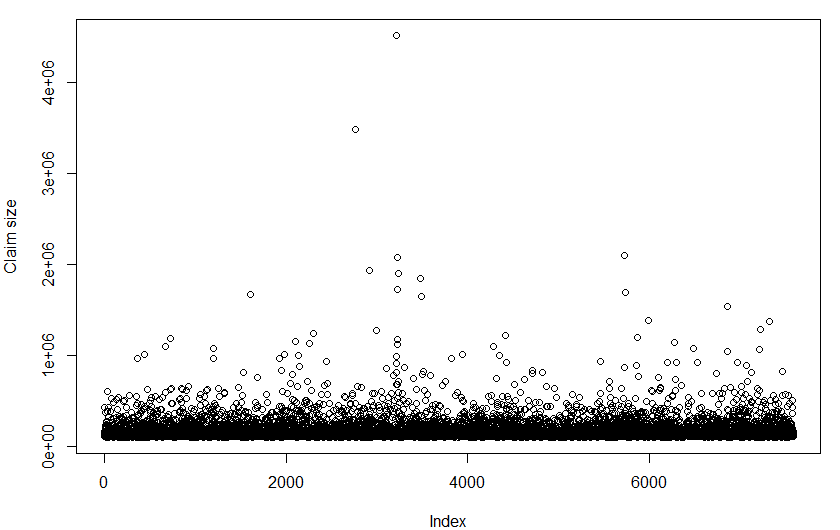}}\hfill
	\subfloat{\includegraphics[width=6cm,height=4cm]{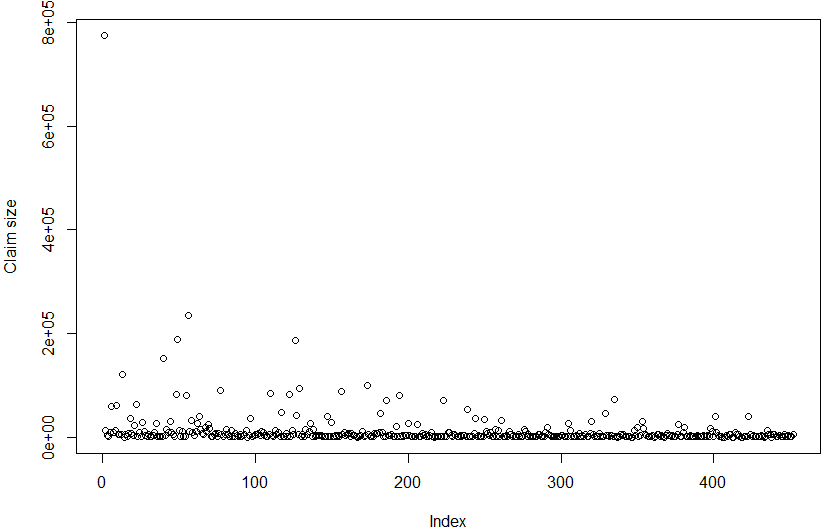}}\hfill
	\caption{Scatter plots: SOA claims data (left panel); GH claims data (right panel).}
	\label{Fig:Figure 7}
\end{figure}

\begin{figure}[!htbp]
	\centering
	\subfloat{\includegraphics[width=6cm,height=4cm]{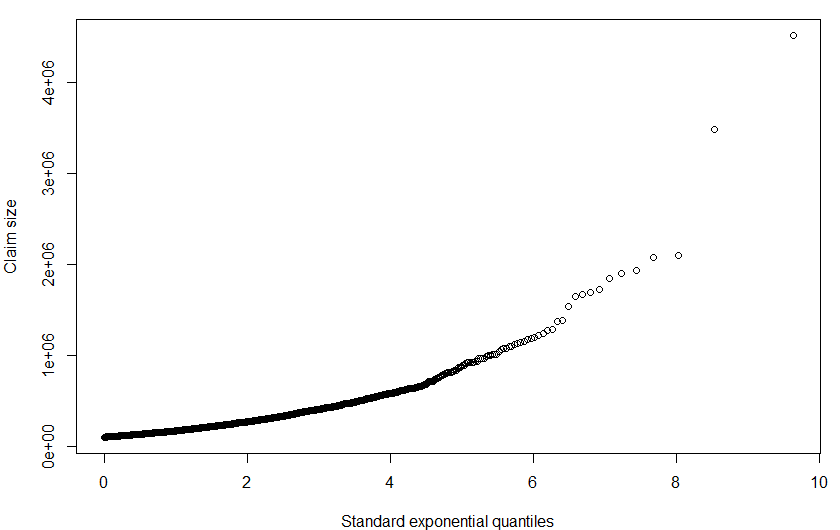}}\hfill
	\subfloat{\includegraphics[width=6cm,height=4cm]{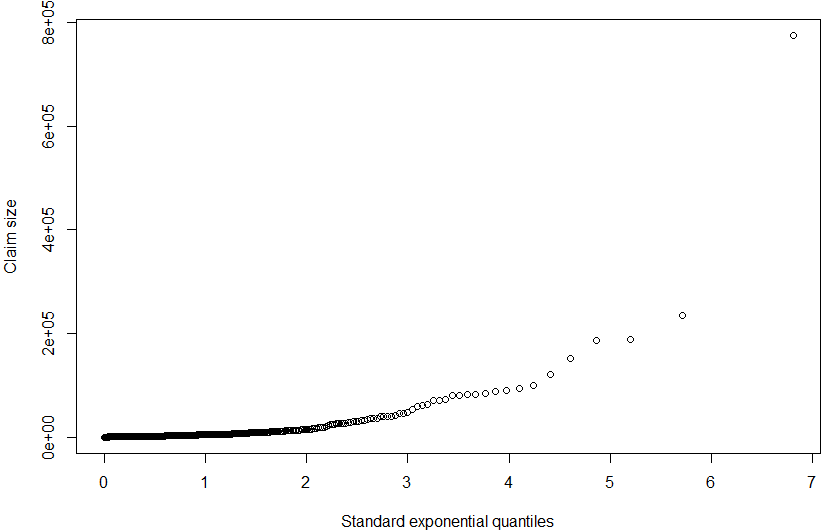}}\hfill
	\\[\smallskipamount]
	\subfloat{\includegraphics[width=6cm,height=4cm]{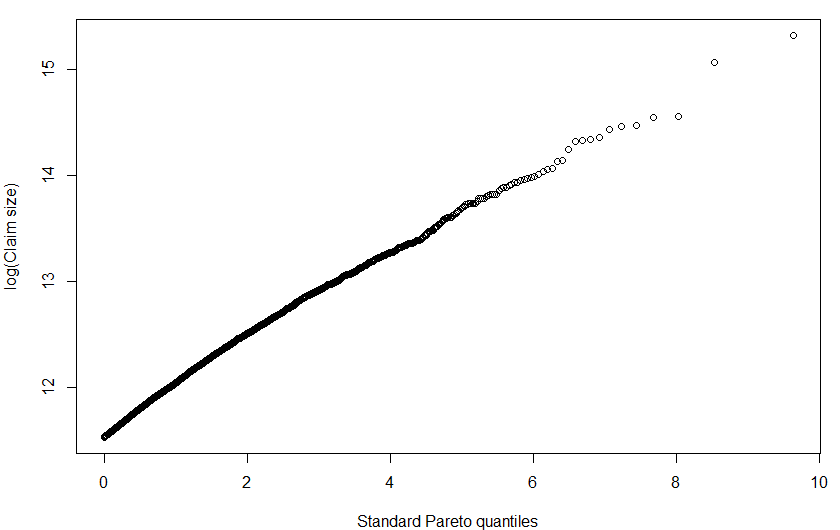}}\hfill
	\subfloat{\includegraphics[width=6cm,height=4cm]{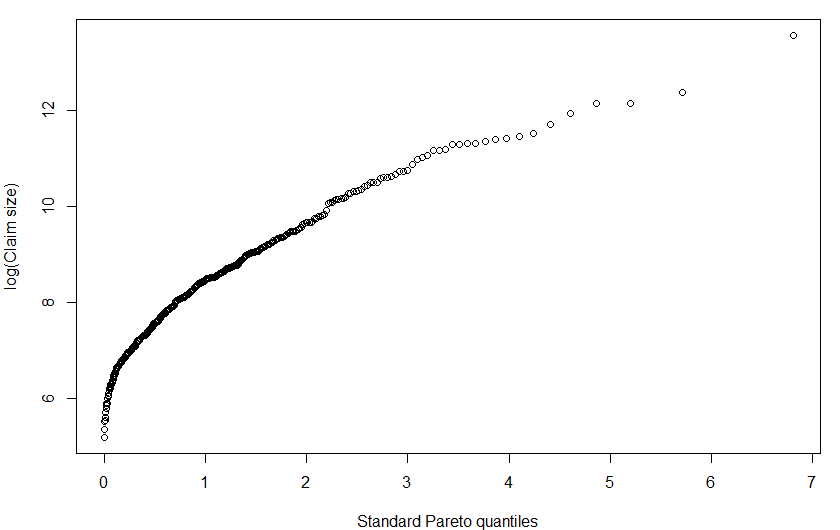}}\hfill
	\caption{Exponential and Pareto Q-Q plots: SOA claims (left panel); GH claims (right panel).}
	\label{Fig:Figure 8}
\end{figure}

 Figure~\ref{Fig:Figure 9} shows the sample paths of the tail index estimators for the underlying distributions of the two datasets. The plot of HILL diverges as $k$ increases, i.e., it is very sensitive to the changes in $k$. Hence, it is not an appropriate estimator for estimating the tail index. The other estimators exhibit some form of stability, however, the sample paths of the proposed estimators (i.e., RWLS and WLS) are smooth, that is, these estimators are less sensitive to changes in $k$. All the tail index estimators considered are very unstable for small values of $k$ due to the small number of exceedances. A specific tail index estimate can be obtained from the plots of WLS and RWLS for both datasets.

\begin{figure}[!htbp]
	\centering
	\subfloat[]{\includegraphics[width=6.2cm,height=5cm]{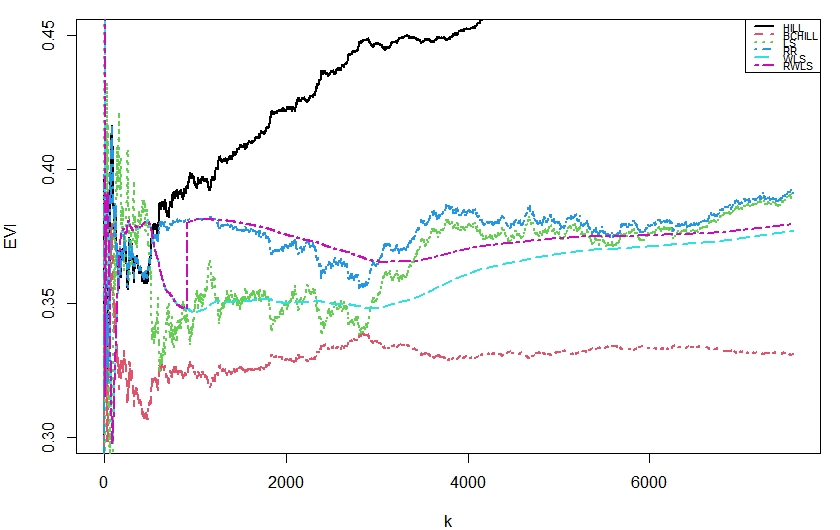}}\hfill
	\subfloat[]{\includegraphics[width=6.2cm,height=5cm]{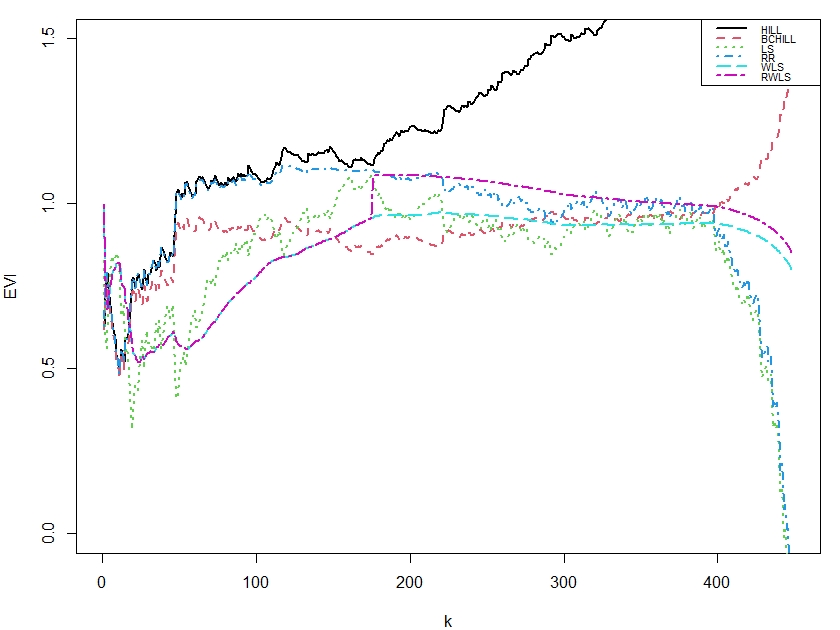}}\hfill

	\caption{Tail index estimates as a function of $k$: (a) SOA data; (b) GH claims.}
	\label{Fig:Figure 9}
\end{figure}

\section{Conclusion}
In this paper, we proposed tail index estimators for the Pareto-type of distributions using the regression model. In addition to the ordinary least squares and the ridge regression estimators, we proposed the regularised weighted least squares and the weighted least squares estimators as alternative reduced-bias estimators. The tail index estimates by the proposed estimators are generally stable and smooth across a broader path of $k$. The characteristics of the proposed estimators are as follows;
\begin{enumerate}
	\item[$\bullet$] they are asymptotically consistent, unbiased and normally distributed with mean $0$ and variance $\gamma^2$.
	\item[$\bullet$] the MSE curves are low and flat over the central part of $k$.
	\item[$\bullet$] the plots of their tail index estimates are more stable, smooth and horizontal than the Hill, ordinary least squares and the bias-corrected Hill estimators.  
\end{enumerate}

In conclusion, comparatively, the proposed estimators are competitive to the existing estimators and can be considered as appropriate estimators of the tail index with MSE, bias and in real-life application. 

\section{Proofs}
\begin{proof}[Proof of  Lemma~\ref{L1}]
	
	\begin{enumerate}

		\item[i.] From Eqn. (\ref{E20})  and Eqn. (\ref{E22}), we have
		\begin{align*}
			S_{1}(\theta)
			&=\sum_{j=1}^{k}\tilde{W}_{j}C_{j}\\
	&=2\left(\dfrac{\alpha(k)+1}{2\alpha(k)+1}\right)\int_{0}^{1}\left(1-\EX(\theta^{\alpha(k)})u\right)u^{-\rho}\,du +O(1)\\
			&=\dfrac{2\left(2\frac{\alpha(k)}{k}-\rho\frac{\alpha(k)}{k}+\frac{1}{k}\right)}{\left(2\frac{\alpha(k)}{k}+\frac{1}{k}\right)(1-\rho)(2-\rho)}+O(1).
		\end{align*}
		
		It follows that,
		\begin{align*}
			\lim\limits_{k\to \infty}\Big[\dfrac{2\left(2\frac{\alpha(k)}{k}-\rho\frac{\alpha(k)}{k}+\frac{1}{k}\right)}{\left(2\frac{\alpha(k)}{k}+\frac{1}{k}\right)(1-\rho)(2-\rho)}+O(1)\Big]
			&=\dfrac{2\left(2\varDelta-\rho\varDelta\right)}{\left(2\varDelta\right)(1-\rho)(2-\rho)}\\
			&=\dfrac{1}{\left(1-\rho\right)},
		\end{align*}
		where $\varDelta=\lim\limits_{k\to \infty}\alpha(k)/k$  and  
			hence, as $k\to \infty$, we  have 
		$S_{1}(\theta)\stackrel{a.s}\longrightarrow 1/(1-\rho).$

		\item[ii.] From Eqn. (\ref{E23}),
		\begin{equation*}
			S_{2}(\theta)=\sum_{j=1}^{k}\tilde{W}_{j}C_{j}^2-\left(\sum_{j=1}^{k}\tilde{W}_{j}C_{j}\right)^2=\sum_{j=1}^{k}\tilde{W}_{j}C_{j}^2-\left(\dfrac{1}{1-\rho}\right)^2.
		\end{equation*}
		Using Eqn. (\ref{E20}), the first term can be written as
		\begin{align*}
			\sum_{j=1}^{k}\tilde{W}_{j}C_{j}^2
			&=2\left(\dfrac{\alpha(k)+1}{2\alpha(k)+1}\right)\int_{0}^{1}\left(1-\EX(\theta^{\alpha(k)})u\right)u^{-2\rho}\,du+O(1)\\
			&=\dfrac{\left(2\frac{\alpha(k)}{k}-2\rho\frac{\alpha(k)}{k}+\frac{1}{k}\right)}{\left(2\frac{\alpha(k)}{k}+\frac{1}{k}\right)(1-\rho)(1-2\rho)}+O(1).
		\end{align*}
		
		It follows that,
		\begin{align*}
			\lim\limits_{k\to \infty}\Big[\dfrac{\left(2\frac{\alpha(k)}{k}-2\rho\frac{\alpha(k)}{k}+\frac{1}{k}\right)}{\left(2\frac{\alpha(k)}{k}+\frac{1}{k}\right)(1-\rho)(1-2\rho)}+O(1)\Big]
			&=\dfrac{\left(2\varDelta-2\rho\varDelta\right)}{2\varDelta(1-\rho)(1-2\rho)}=\dfrac{1}{\left(1-2\rho\right)},
		\end{align*}
		
		where $\varDelta=\lim\limits_{k\to \infty}\alpha(k)/k$. Therefore, 
		$$S_{2}(\theta)\longrightarrow \dfrac{1}{1-2\rho}-\left(\dfrac{1}{1-\rho}\right)^2=\dfrac{\rho^2}{\left(1-2\rho\right)\left(1-\rho\right)^2}$$ as $k\to \infty$. That is, $S_{2}\left(\theta\right)\stackrel{a.s}\longrightarrow \rho^2/(1-2\rho)(1-\rho)^2$ as $k\to \infty$. 
		
		\item[iii.] The expression $\dot{S}(\theta)=\sum_{j=1}^{k}\tilde{W}_{j}^2\left(S_{1}(\theta)-C_{j}\right)$ can also be written as
		\begin{align*}
			\dot{S}(\theta)
			&=\dfrac{4}{k}\left(\dfrac{\alpha(k)+1}{2\alpha(k)+1}\right)^2\left\{\dfrac{1}{k}\sum_{j=1}^{k}\left(1-\theta_{j}^{\alpha(k)}\dfrac{j}{k+1}\right)^2\left(S_{1}(\theta)-\left(\dfrac{j}{k+1}\right)^{-\rho}\right)\right\}\\
			&=\dfrac{4}{k}\left(\dfrac{\frac{\alpha(k)}{k}+\frac{1}{k}}{2\frac{\alpha(k)}{k}+\frac{1}{k}}\right)^2\int_{0}^{1}\left(1-\EX(\theta^{\alpha(k)})u\right)^2\left(\dfrac{1}{1-\rho}-u^{-\rho}\right)\,du+O(1).
		\end{align*}
		Therefore, as $k\to \infty$, $\dot{S}(\theta)\stackrel{a.s}\longrightarrow 0$.
		
		\item[iv.] $\ddot{S}(\theta)=\sum_{j=1}^{k}\tilde{W}_{j}^2\left(S_{1}(\theta)-C_{j}\right)^2$ can also be expressed as
		\begin{align*}
			\ddot{S}(\theta)
			&=\dfrac{4}{k}\left(\dfrac{\alpha(k)+1}{2\alpha(k)+1}\right)^2\left\{\dfrac{1}{k}\sum_{j=1}^{k}\left(1-\theta_{j}^{\alpha(k)}\dfrac{j}{k+1}\right)^2\left(S_{1}(\theta)-\left(\dfrac{j}{k+1}\right)^{-\rho}\right)^2\right\}\\
			&=\dfrac{4}{k}\left(\dfrac{\frac{\alpha(k)}{k}+\frac{1}{k}}{2\frac{\alpha(k)}{k}+\frac{1}{k}}\right)^2\int_{0}^{1}\left(1-\EX(\theta^{\alpha(k)})u\right)^2\left(\dfrac{1}{1-\rho}-u^{-\rho}\right)^2\,du+O(1).
		\end{align*}
		It also follows that, as $k\to \infty$,  $\ddot{S}(\theta)\stackrel{a.s}\longrightarrow 0$.
		
	\end{enumerate}
\end{proof}

\begin{proof}[Proof of  Lemma~\ref{L2}]
	The proof of Lemma~\ref{L2} easily follows by using Lemma~\ref{L1}.
\end{proof}

\begin{proof}[Proof of Lemma~\ref{L3}]  We  observe that  $$\dfrac{-k^\rho}{|k^{1+\omega+\rho}+n^{\rho}|} \le \mathfrak{M}_{\lambda}(\theta,C_j)\le \dfrac{k^{\rho}C_{j}^2/S_{2}(\theta)}{|k^{1+\omega+\rho}+n^{\rho}|} .$$
	Therefore, we  have  $\mathfrak{M}_{\lambda}(\theta, C_j) \to O(1/k^{1+\omega}),\hspace{4pt}0<\omega\le 0.1$, as $k\to \infty$,  which  completes  the  proof  of  Lemma~\ref{L3}.
\end{proof}

\begin{proof}[Proof of  Lemma~\ref{L4}]
	The proof requires the use of large deviation principles (LDP). From Eqn.(\ref{E15}) and Eqn.(\ref{E34}),
	$\hat b_{n,k}(\theta,\hat\rho)=\sum_{j=1}^{k}\mathfrak{M}_{\lambda}(\theta,C_j) T_{j}.$  	Given  $\{\hat\rho=\rho\} $, $T_{j}$ is exponentially distributed with mean $$\mu_{j}(\theta,\rho)=\EX[T_j|\hat\rho=\rho,\theta]=\gamma+\hat{b}_{n,k}(\hat\rho,\theta)C_j(\hat\rho).$$ Therefore,  we  have
	
	\begin{equation}\label{E36}
		M_{T_{k},\theta,\rho}(t)=\EX[e^{tT_k}|\theta,\hat\rho=\rho]
		=\Big(\dfrac{1}{1-\mu_{k}(\theta,\rho)t}\Big).
	\end{equation}
	
	Using Lemma~\ref{L4},  Eqn.(\ref{E36})  and  similar  calculations  as  in  \cite{ocran2021reduced}, the moment generating function of  $b_{n,k}$ given  the  law  of  $\{\theta,\hat\rho=\rho\} $  is  
	
		\begin{equation*} M_{\hat{b}_{n,k}(\rho)}(t)=
\left\{\prod_{j=1}^{k}\left(\dfrac{1}{1-\mu_{j}(\theta,\rho)\mathfrak{M}_{\lambda}(\theta)t}\right)\right\}.
\end{equation*}

	It follows that

	\begin{equation*}
		M_{\hat b_{n,k}}(kt)=\left\{\prod_{j=1}^{k}\left(\dfrac{1}{1-k\mu_{j}\mathfrak{M}_{\lambda}(\theta, C_j)t}\right)\right\}.
	\end{equation*}

	Now  using the  bound  on  $\mathfrak{M}_{\lambda}(\theta,C_j)$, see Lemma~\ref{L3}  and     the Squeeze Theorem, we obtain
	\begin{equation*}
		\lim\limits_{k \to \infty}\dfrac{1}{k}\log M_{b_{n,k}}(kt)=0.
	\end{equation*}
	Hence, by the G$\ddot{a}$rtner Ellis Theorem, conditional  on $\{\theta,\rho\},$ the  statistics  $\hat b_{n,k}(\theta,\rho)$ follows a Large Deviation Principle (LDP) with speed $k$ and a rate function $I(x)$ defined as
	\begin{align*}
		I(x)
		&=\sup_{\eta\in\mathbb{R}}\left\{\eta x-0\right\}
		=\sup_{\eta\in\mathbb{R}}\left\{\eta x\right\}\\
		&=\begin{cases}
			0 \hspace{6pt} \text{if} \hspace{6pt} x=0,
			\\
			\infty \hspace{6pt} \text{if} \hspace{6pt} x\ne 0,
		\end{cases}
	\end{align*}
	where $x\in [0, \infty)$. This  implies  for every $ \epsilon>0$, we  have  
	\begin{align*}
		\prob\left(\sqrt{k}b_{k,n}>\epsilon\right)
		&=\prob\left(b_{n,k}>\frac{\epsilon}{\sqrt{k}} \Big|(\hat\theta,\hat\rho)=(\theta,\rho)\right)\prob\Big[(\hat\theta,\hat\rho)=(\theta,\rho)\Big]\\
		&=\prob\left(b_{n,k}>\varGamma_{k}\right)\prob\Big[(\hat\theta,\hat\rho)=(\theta,\rho)\Big]
		\le e^{-kI(x)+o(k)},
	\end{align*} 
	
	where $\varGamma_{k}=\epsilon/k$. The atypical behaviour of the rate function is when $x\ne 0$, therefore
	\begin{equation*}
		\lim\limits_{k \to \infty}\prob\left(\sqrt{k}\hat b_{n,k}>\epsilon\right)\le 0.
	\end{equation*}
	Thus, $\sqrt{k}\hat b_{n,k}\rightarrow_{p} 0$ as $k \to \infty$ and this ends the proof.
\end{proof}
	
	\begin{proof}[Proof of Lemma~\ref{L5}]
		We  observe  that  $T_{1}, T_{2}, T_{3},...$ are independent but not identical  distributed  random  variables.
		\begin{itemize}
			\item [(i)] $$\begin{aligned}  \EX(T_j)=\EX\Big\{\EX\left(T_{j}\big|(\theta, \hat \rho\right)\Big\}&= \EX\Big\{\gamma+\hat b_{n,k}(\theta,\hat\rho)C_j(\hat\rho)\Big\}\le\gamma+\EX(\hat b_{n,k}(\theta,\hat\rho))\\
			&\le \gamma+ \EX\Big[\frac{(1-2\hat\rho){(1-\hat\rho)^2}k^{\hat\rho+1}\sum_{j=1}^{k}C_j(\hat\rho)Z_j/k}{ \hat\rho^2|k^{1+\omega+\hat\rho}+n^{\hat\rho}|}+o(1)\Big]\\
			&\le\gamma+\EX\Big[\frac{(1-2\hat\rho){(1-\hat\rho)^2}}{\hat\rho^2k^{\omega}} \EX\Big[\sum_{j=1}^{k}C_j(\hat\rho)Z_j/k \Big]\Big]+o(1)\,\,\, \mbox{a.s.}\\
			&\le\gamma+\EX\Big[\frac{(1-2\hat\rho){(1-\hat\rho)^2}}{\hat \rho^2}\Big] \Big[\frac{\EX(\gamma_k^{H})}{k^{\omega}}\Big] +o(1)\,\,\, \mbox{a.s.}\\
			&\le\gamma+ o(1/k^{\omega})+o(1), 	
			\end{aligned}$$
			
			where  $0<\omega<0.1$    and  $\gamma_k^{H}$  is  the  Hill  estimator.   	Hence  we  have  $$\mu_j=\EX\Big[\EX_{\theta}(\tilde{W}(\theta_j))T_j\Big]\le\EX_{\theta}[\tilde{W}_j(\theta_j)]( \gamma+ o(1/k^{\omega})+o(1)).  $$
			
			\item[(ii)] Let  $f$  be  the  probability  density  function  of  $ T_j  $   given $(\theta,\hat\rho)$  and  observe  that  we
			
			\begin{align*}
			\EX\left(|Z_{j}-\mu_{j}|^{2+\delta}\big|\theta, \hat \rho=\rho\right)
			&= \int_{0}^{\infty}|Z_{j}-\mu_{j}|^{2+\delta}f(z_{j})\,d{z}_{j}\\
			&=-\int_{0}^{\mu_{j}}\left(Z_{j}-\mu_{j}\right)^{2+\delta}f(z_{j})\,d{z}_{j}+\int_{\mu_{j}}^{\infty}\left(Z_{j}-\mu_{j}\right)^{2+\delta}f(z_{j})\,d{z}_{j}\\
			&=\dfrac{e^{-1}}{\mu_{j}}\left\{\dfrac{(-\mu_{j})^{3+\delta}(7+2\delta)}{(3+\delta)(4+\delta)}+\mu_{j}^{3+\delta}(2+\delta)!\right\}\{1+O(1)\}\\
			&=\mu_{j}^{2+\delta}\eta(\delta)\{1+O(1)\},
			\end{align*}
			where 
			$\eta(\delta)=\dfrac{1}{e}\left\{\dfrac{(-1)^{3+\delta}(7+2\delta)}{(3+\delta)(4+\delta)}+(2+\delta)!\right\}<\infty$ for  $0<\delta<\infty$.  Therefore  we  have  that   $$\displaystyle \EX\left(|Z_{j}-\mu_{j}|^{2+\delta}\right)=\EX\Big\{\EX\left(|T_{j}-\mu_{j}|^{2+\delta}\big| (\theta, \hat \rho)\right)\Big\}\le \eta(\delta)\{1+O(1)\} \mu_{j}^{2+\delta}.  $$

			Now define
			
			$Z_{j}=\EX_{\theta}[\tilde{W}_j(\theta_j)]T_{j}-\mu_{j}, 1\le j \le k,$   and  note  that

			$$\begin{aligned}
			S_{n}^2 &=Var\left(\sum_{j=1}^{k}\EX_{\theta}Z_{j}\right)\\
			&=\sum_{j=1}^{k}Var(\EX_{\theta}\tilde{W}_j(\theta_j)T_{j})=\sum_{j=1}^{k}[\EX_{\theta}\tilde{W}_j(\theta)]^2Var (T_{j})=\gamma^2 \sum_{j=1}^{k}[\EX_{\theta}\tilde{W}_j(\theta)]^2
			\end{aligned}$$
	
			Hence,
			$$\begin{aligned}
			&\lim_{k\to \infty}\dfrac{1}{S_{n}^{2+\delta}}\sum_{j=1}^{k}\EX\left(|Z_{j}-\mu_{j}|^{2+\delta}\right)\\
			&\le\lim_{k \to\infty}\Big[\dfrac{\eta(\delta)\{1+o(1)\}\Big(\gamma+ o(1/k^{\omega})+o(1)\Big)^{2+\delta}\sum_{j=1}^{k}[\EX_{\theta}\tilde{W}_j(\theta_j)]^{2+\delta}}
			{(\gamma^2 \sum_{j=1}^{k}\big[\EX_{\theta}\tilde{W}_j(\theta)\big]^2)^{1+\delta/2}}\Big]\\
			&=\eta(\delta)\lim_{k \to\infty}\Big[\dfrac{\sum_{j=1}^{k}[\EX_{\theta}\tilde{W}_j(\theta_j)]^{2+\delta}}
			{(\sum_{j=1}^{k}\big[\EX_{\theta}\tilde{W}_j(\theta)\big]^2)^{1+\delta/2}}\Big]\\
			&\le\eta(\delta)\lim_{k\to\infty}\Big[\dfrac{2^{1+\delta/2}k^{-\delta/2}\kappa(\alpha(k))^{1+\delta/2}(1+o(k))}{(1-o(k))^{1+\delta/2}}\Big]\\
		&	=0
			\end{aligned}$$
			
		\end{itemize}	
	\end{proof}

\begin{proof}[Proof of Theorem~\ref{TH1}]
Using Lemma~\ref{L4} and Lemma~\ref{L5}, we can prove Theorem~\ref{TH1}. It has been established in Lemma~\ref{L4} that $\sqrt{k}\hat b_{n,k} \to_p 0$ as $k \to \infty$. Lemma~\ref{L5} also establishes that the Lyapunov's condition holds for Central Limit Theorem; hence, by the Lyapunov's Central Limit Theorem;
$$\sqrt{k} \left(\EX_{\theta}(\hat{\gamma}_{RW}(\lambda,\theta))-\gamma\right)\longrightarrow_{d} N(\mu, \sigma^2), \hspace{6pt} \text{as}  \hspace{6pt} k\to \infty$$.

Therefore, all we need to complete the proof of Theorem~\ref{TH1}, is to specify the parameters of the normal distribution.

Recall from (\ref{E16}) that
\begin{align*}
\hat{\gamma}_{k}(\lambda):=	\EX_{\theta}(\hat{\gamma}_{RW}(\lambda,\theta))=\EX_{\theta}\Big[\sum_{j=1}^{k}\tilde{W}_{j}T_{j}-\hat b_{n,k}\sum_{j=1}^{k}\tilde{W}_{j}C_{j}\Big].
\end{align*}

Also, from Lemma~\ref{L4}, asymptotically, the second term on the right hand side vanishes, so we would concentrate on the first term of the expression only.

Let
$$S_{k}
	=\sqrt{k}\left(\hat{\gamma}_{k}(\lambda)-\gamma\right)
=\sqrt{k}\left\{\EX_{\theta}\Big[\sum_{j=1}^{k}\tilde{W}_{j}T_{j}-\hat b_{n,k}\sum_{j=1}^{k}\tilde{W}_{j}C_{j}\Big]-\gamma\right\}. $$

The expected value of $S_{k}$ is given as
\begin{align*}
	\EX(S_{k})
	&=\EX_{\theta}\left\{\sqrt{k}\sum_{j=1}^{k}\tilde{W}_{j}\EX(T_{j})-\sqrt{k}\hat b_{n,k}\sum_{j=1}^{k}\tilde{W}_{j}C_{j}-\sqrt{k}\gamma\right\}\\	&=\EX_{\theta}\EX_{\rho}\left\{\sqrt{k}\sum_{j=1}^{k}\tilde{W}_{j}\left\{\gamma+\hat  b_{n,k}(\rho)C_{j}(\rho)\right\}-\sqrt{k}\hat b_{n,k}(\rho)\sum_{j=1}^{k}\tilde{W}_{j}C_{j}(\rho)-\sqrt{k}\gamma\right\}\\
	&=\EX_{\theta}\EX_{\rho}\left\{0\right\}\\
	&=0.
\end{align*}

Hence $\EX(S_k)\to \mu=0$ as $k\to \infty$. Recall that, $k\to \infty$, $\sqrt{k}\hat b_{n,k}\to_p 0,$ $2\kappa(\alpha(k))\to 1$ and   by  assumption  $\left(1-\theta_{j}^{\alpha(k)}\frac{j}{k+1}\right)$ and $T_{j}$ are independent,  therefore,  the variance of $S_{k}$ is given by 
\begin{align*}
	Var(S_{k})
	&=kVar\left(\EX_{\theta}\sum_{j=1}^{k}\tilde{W}_{j}T_{j}\right)\\
	&=\dfrac{4}{k}\kappa^{2}(\alpha(k))\sum_{j=1}^{k}Var\left\{\EX_{\theta}\left(1-\theta_{j}^{\alpha(k)}\dfrac{j}{k+1}\right)T_{j}\right\}\\
	&=\dfrac{4}{k}\kappa^{2}(\alpha(k))\sum_{j=1}^{k}\left\{\EX_{\theta}\left(1-\theta_{j}^{\alpha(k)}\dfrac{j}{k+1}\right)\right\}^2Var (T_{j})\\
	&=\dfrac{4}{k}\kappa^{2}(\alpha(k))\sum_{j=1}^{k}\left\{\left(1-\dfrac{1}{\alpha(k)+1}\dfrac{j}{k+1}\right)\right\}^2Var (T_{j})\\
		&=4\kappa^{2}(\alpha(k))\gamma^2\left\{\left(1-\dfrac{1}{(\alpha(k)+1)}+ o(k)/k\right)\right\}\\
\end{align*}

Using  the  assumption $\alpha(k)/k\to\varDelta<\infty$,  as $ k\to\infty$  we  have	$Var(S_{k}) \longrightarrow \sigma^2=\gamma^2,$  which  completes the proof of Theorem~\ref{TH1}.
\end{proof}


\bibliography{Estimations,Paper1,Cit}

\end{document}